\documentclass[copyright,creativecommons]{eptcs}
\usepackage{breakurl}             
\usepackage{underscore}           

\usepackage[algoruled,vlined]{algorithm2e}
\SetKwProg{Fn}{Function}{}{}

\usepackage[nomargin,inline,index,%
]{fixme}

\usepackage[shortlabels]{enumitem}
\usepackage{makeidx}
\usepackage{color}
\usepackage{wrapfig}

\usepackage{mathpartir,amsmath,amssymb
}

\usepackage{prooftree}
\usepackage{mathtools}

\usepackage{hyperref}
\usepackage[capitalise]{cleveref}

\usepackage[all]{xy}



\newtheorem{theorem}{Theorem}[section]
\newtheorem{lemma}[theorem]{Lemma}
\newtheorem{example}[theorem]{Example}
\newtheorem{definition}[theorem]{Definition}
\newtheorem{proposition}[theorem]{Proposition}
\newtheorem{remark}[theorem]{Remark}
\newenvironment{proof}{{\em Proof.}}{
}
\newenvironment{myproof}
               {
               \begin{proof}
               }{
              \end{proof}\vspace{13pt}
              }

\newcommand{\QED}{{\hfill $\square$}}

%
%
%


\newcommand{\ok}{ok}

\newcommand{\from}{\leftarrow}



\newcommand{\pre}{pre-}
\newcommand{\pair}[2]{ ( #1,#2) }

\newcommand{\T}{{P}}
\newcommand{\R}{R}
\renewcommand{\S}{S}

\newcommand{\PP}{\ensuremath{P}}
\newcommand{\Q}{\ensuremath{Q}}

\newcommand{\Lab}{\Lambda}

\newcommand{\Labgt}{\Gamma}
\newcommand{\Labt}{\Labgt}

\newcommand{\rulename}[1]{\mbox{\small[\textsc{#1}]}}

\newcommand{\depth}[2]{ weight (#1, #2)}

\newcommand{\mylabel}[1]{\label{#1}}

\newcommand{\mergeG}{\text{\sc cn}}


\newcommand{\s}{{\sf s}}
\newcommand{\pp}{\mathsf{p}}

\newcommand{\q}{\pq}
\newcommand{\pq}{{\sf q}}
\newcommand{\pr}{{\sf r}}
\newcommand{\ps}{{\sf s}}
\newcommand{\pt}{{\sf t}}

\newcommand{\val}{\kf{v}}

\newcommand{\sep}{\ensuremath{~~\mathbf{|\!\!|}~~ }}
\newcommand{\kf}[1]{\ensuremath{\mathsf{#1}}}
\newcommand{\pc}{\ensuremath{~|~}}

\newcommand{\G}{\ensuremath{{\sf G}}}
\newcommand{\Y}{\ensuremath{{\sf Y}}}
\newcommand{\Gvti}[5]{\ensuremath{#1\to#2:\{#3_i. #5_i \mid {\scriptstyle 1 \leq i \leq n } \}}}

\newcommand{\mkt}[3]{\ensuremath{#1\to#2:#3  }}

\newcommand{\N}{\ensuremath{\mathcal M}}
\newcommand{\M}{\ensuremath{\mathcal M}}

\newcommand{\pa}[2] 
{#1 \triangleright  #2}

\newcommand{\der}[3]{ #1 \vdash   #2  :#3}
\newcommand{\derp}[3]{ #1 \vdash^+   #2  :#3}



\newcommand{\participantP}[1]{\mathtt{ptp} (#1)}
\newcommand{\participantS}[1]{\mathtt{pts} (#1)}
\newcommand{\participantG}[1]{\mathtt{ptg} (#1)}
\newcommand{\labels}[1]{ {msg} (#1) }
\newcommand{\paths}[1]{{paths} (#1)}
\renewcommand{\path}{\mypath}
\newcommand{\mypath}{\rho}
\newcommand{\emptypath}{\epsilon}

\newcommand{\project}{\upharpoonright}

\newcommand{\proj}[2] 
{ #1 \! \! \project_{#2} }
\newcommand{\projp}[2] 
{ (#1) \project_{#2}}


\newcommand{\pin}[2]{#1 ? #2}
\newcommand{\pout}[2]{#1 ! #2}

\newcommand{\mklab}[2]{#1 \uplus #2}
\newcommand{\setl}[3]{\{#1_{i}. #2_{i} \mid {\scriptstyle 1 \leq i \leq #3} \}}

\newcommand{\setlp}[4]{\{#1_{i}. \proj{#2_{i}}{#4} \mid {\scriptstyle 1 \leq i \leq #3} \}}

\newcommand{\inact}{\ensuremath{\mathbf{0}}}


\newcommand{\tend}{\mathtt{end}}

\renewcommand{\S}{S}




\newcommand{\subt}{\leqslant}
\newcommand{\subtp}{\leqslant^+}

\newcommand{\suptp}{\,^+\!\!\!\geqslant}

\newcommand{\red}{\longrightarrow}

\newcommand{\redparr}[1]{\xrightarrow{#1}}
\newcommand{\redpar}{\longrightarrow}









\newcommand{\cinferrule}[3][]{
  \mprset{fraction={===},
  fractionaboveskip=0.2ex,
  fractionbelowskip=0.4ex}
  \inferrule[#1]{#2}{#3}
}

\definecolor{ceca}{rgb}{1,0.5,0}


%
%
   \newcommand{\mysection}[1]{
  \section{#1}
  }

\DeclareMathAlphabet{\mathpzc}{OT1}{pzc}{m}{it}

\newcommand{\vr}{\iota}

\newcommand{\set}[1]{\{#1\}}

\newcommand{\Comment}[1]{ }

\newcommand{\ByDef}{\triangleq}
\def\Pred[#1]{~[\,#1\,]}

\newcommand{\eraseP}{\mbox{\sc ers}}

\newcommand{\HH}{\mathsf{h}}
\newcommand{\KK}{\mathsf{k}}

\newcommand{\interfacecomp}{\!\leftrightarrow\!}

\newcommand{\conn}[2]{\stackrel{\hspace{-2.5pt}#1\!\leftrightarrow\!#2\!\!\hspace{-0.5pt}}{}}

\newcommand{\noint}{\texttt{\#}}

\newcommand{\gateway}[1]{\mathsf{gw}(#1)}

\newcommand{\Dual}[1]{\overline{#1}}









\newcommand{\Set}[1]{\{#1\}}


















\begin{document}

\title{Open Multiparty Sessions} 

\author{Franco Barbanera\thanks{
Partially supported by the 
project ``Piano Triennale Ricerca'' DMI-Universit\`a di Catania. 
}\\
Dipartimento di Matematica e Informatica\\
Universit\`a di  Catania, Catania Italy\\
barba@dmi.unict.it\and
Mariangiola Dezani-Ciancaglini\thanks{Partially supported by Ateneo/Compagnia di San Paolo 2016/2018 project ``MnemoComputing - Components for Processing In Memory".}\\
Dipartimento di Informatica,\\
Universit\`a di Torino, Torino Italy\\
dezani@di.unito.it
}


\def\titlerunning{Open Multiparty Sessions} 

\def\authorrunning{
 Barbanera \& Dezani-Ciancaglini 
}

\maketitle

\begin{abstract} 
Multiparty sessions are systems of concurrent processes, which allow several participants 
to communicate by sending and receiving messages. Their overall behaviour can be described
by means of global types.
Typable multiparty session  enjoy lock-freedom.  
We  look at
multiparty sessions as {\em open} systems
by a suitable definition of {\em connection} transforming {\em compatible}
processes into {\em gateways} (forwarders). 
A relation resembling the standard subtyping relation for session types is used to formalise compatibility.
We show that the session obtained by connection  can be typed by manipulating the global types of the starting sessions. This allows us to prove that  
lock-freedom is preserved by connection. 
\end{abstract}


\mysection{Introduction}
Distributed systems are seldom developed as independent
entities and, either directly in their design phase or even after their
deployment, they should be considered as open entities ready for interaction
with an environment. 
In general, it is fairly natural to expect to connect open systems
as if they were composable modules, and in doing that we should rely on
``safe'' methodologies and techniques, guaranteeing 
the composition not to ``break'' any relevant property of the 
single systems. 

In \cite{BdLH18} a methodology 
has been proposed for the connection 
of open systems, consisting in replacing
{\em any} two participants - if their behaviours are ``compatible'' -
by two  forwarders, dubbed {\em gateways}, enabling the systems to interact. 
The  behaviour of any participant can be looked at
as an {\em interface}
since, without 
loss of generality, 
the notion of interface is interpreted not
as the description of the interactions ``offered'' by a system
but, dually, as those ``required''  by a possible environment (usually  
another system).

 Inspired by \cite{LTPC}, the aim of the present paper is to look for a choreography formalism enabling to ``lift'' the connection-by-gateways
by means of a proper function definable on  protocol descriptions. 
The function should yield  the  protocol  
of the system obtained by connecting the systems described by the arguments of the function itself.
 Connected systems would hence enjoy all the good communication properties
guaranteed by the formalism itself, which - with no ad-hoc extension of the syntax - could be  seen
as a choreography formalism for open  systems.
 
We took into account the choreography model of MultiParty Session Types (MPST)
\cite{HYC08,Honda2016}.
Of course not all of the MPST formalisms are suitable for our aim. For instance, in the formalisn of \cite{DBLP:journals/mscs/CoppoDYP16} the requirements 
imposed by its  type system are too strong for the 
 gateway processes to be typed, so preventing the function we are looking for to be definable. 
 
The MPST formalism  that we introduce in the present paper (inspired by \cite{DS}) proved to be  a 
right candidate. 
The simplicity of the calculus allows to get rid of channels and local types.
Moreover, its abstract point of view for what concerns global and local behaviours (looked at as infinite regular trees) also enables to focus on the relevant  
aspects of the investigation without the hindering syntactic descriptions of recursion.
In particular, with respect to \cite{DS}, we relax the conditions imposed on global types in order to be projectable, so ensuring projectability of ``connected'' global types.
 (a property that  does  not hold in the formalism of \cite{DS}).
In our  formalism, typable
systems are guaranteed to be lock-free  \cite{Kobayashi02}.  
 The systems obtained by connecting typable systems are lock-free too. The main tool is a function from the global types of the original systems to the global type of the system obtained by connection. \\
In the present setting  it is also possible to investigate in a clean way the notion of {\em interface compatibility}:
we show that the compatibility relation used in \cite{BdLH18} can be relaxed to a relation closely
connected to the observational preorder of \cite{DS}, in turn corresponding to the subtyping  relation for session types of \cite{GH05,DemangeonH11}.

{\bf Outline} The first three sections introduce our calculus of multiparty sessions, together with their global types, and prove the properties of well-typed sessions. In the following two sections  we define the compatibility relations  and the gateway connections for sessions and global types, respectively. Our main result, i.e.  the typability and hence the lock-freedom of the session obtained by 
gateway connection, is Theorem \ref{t:m}. Sections \ref{s:rfw} and \ref{s:fwc} conclude discussing related and future works, respectively.


 \mysection{Processes and Multiparty Sessions}

We use the following base sets and notation: 
 \emph{messages}, ranged over by $\ell,\ell',\dots$;
 \emph{session participants}, ranged over by $\pp,\pq,\ldots$;
\emph{processes}, ranged over by $P,Q,\dots$; 
\emph{multiparty sessions}, ranged over by 
\linebreak 
$\N,\N',\dots$;
\emph{integers}, ranged over by $n, m, i, j, ,\dots$.

Processes  implement the behaviours of single participants.
The input process $\pin{\pp}{\setl{\ell}{\PP}{n}}$ 
 waits for one of the  messages $\ell_i$ from participant $\pp$; 
the output process $\pout{\pp}{\setl{\ell}{\PP}{n}}$ 
chooses one message $\ell_i$ and sends it to participant $\pp$. 
We use $\Lab$ as shorthand for $\setl{\ell}{\PP}{n}$.
  We define   the multiset of messages in $\Lambda$ as
$\labels {\setl \ell \PP  n}   = \{ \ell_i \mid 1 \leq i \leq n \}$. 
 After sending or receiving the message $\ell_i$ the process reduces to $\PP_i$ ($1\leq i\leq n$). 
The set $\Lab$ in $\pin{\pp}\Lab$ acts as an {\em external choice},
while the same set in $\pout{\pp}\Lab$ acts as an {\em internal choice}.
 In a full-fledged calculus, messages would carry values, namely
  they would be of the form $\ell(\val)$. 
  Here for simplicity we consider only pure messages.  This agrees with the focus of session calculi, which is on process interactions that do not depend on actual transmitted values.  \\
For the sake of abstraction, we do  not  take into account any explicit syntax for recursion, but rather consider
processes as, possibly infinite, regular trees.

It is handy to first define \pre processes, since the processes must satisfy conditions which can be easily
 given using the tree representation of \pre processes. 

\begin{definition}[Processes]
\begin{enumerate}[label=(\roman*)]
\item
We say that $\PP$ is   a {\em \pre process}  and
$\Lab$ is a {\em \pre choice of messages}  if they  are generated by the
 grammar:  \\
\centerline{$
\begin{array}{lll@{\quad\qquad\qquad}lll}
\PP   &   ::=^{coinductive}  & 
 \inact  \quad \sep \quad 
\pin{\pp}{\Lab}   \quad   \sep   \quad   
\pout{\pp}{\Lab}   
&
\Lab & ::= {\setl{\ell}{\PP}{n}} 
                   \end{array}
$}
and 
  all  messages in  $\labels {\Lab}$  are pairwise distinct.

 \item 
 The \emph{tree representation} of   a  \pre process   
is  a 
  directed rooted  tree,  where:\ 
 (a)    each  internal 
  node is labelled by   
$\pp ?$ or $\pp !$
 and has  
as many children as the number of messages,\ 
(b) the edge from 
 $\pp ?$ or $\pp !$ to the child $\PP_i$ is labelled by $\ell_i$ and \ 
(c) the  leaves
of the tree (if any)
are labelled by $\inact$.
 \item
 We say that a \pre process $\PP$ is a {\em process} if 
the tree representation of $\PP$ 
is regular (namely, it has finitely many distinct sub-trees).
We say that  a pre-choice of messages 
$\Lab$ is  a {\em choice of messages}
if all the \pre processes in $\Lab$ are processes.
\end{enumerate}
 \end{definition}
 
  We identify   
 processes  
with their 
tree representations and  we shall sometimes refer to the trees
as the processes themselves. 
 The regularity condition  implies that we only consider processes 
admitting a finite description. 
 This  
is equivalent to  writing  
processes 
with  $\mu$-notation and  an  
equality
which allows for an infinite number of unfoldings.
This is also called the {\em  equirecursive approach}, since it views
processes as the unique solutions 
of (guarded) recursive equations~\cite[Section 20.2]{pier02}.
The existence and
uniqueness of a solution   
 follow from known
results~(see \cite{Courcelle83}  and 
 \cite[Theorem 7.5.34]{BookCC}). 
It is natural to use {\em coinduction} as the main 
logical tool, as we do in most of the proofs.
In particular, we adopt the coinduction style
advocated in \cite{KozenS17} which, without any loss of formal rigour,
 promotes readability and conciseness.

 We define  
the set $\participantP\PP$ of participants of process $\PP$ by: $
\participantP\inact   =\emptyset$ and\\
\centerline{$
\participantP{\pin{\pp}{\setl{\ell}{\PP}{n}}}  =
\participantP{\pout{\pp} \setl{\ell}{\PP}{n}}
 = \{ \pp \} \cup \participantP {\PP_1} \cup \ldots \cup \participantP {\PP_n}
$}
 The regularity of processes  
 assures that the set of  participants  is finite.
 
  We shall write 
  $\mklab{\ell. \PP}{ \Lab}$ for $\{\ell. \PP\}\! \cup\! \Lab$ if 
 $\ell\!\not \in\! \labels \Lab$ and 
 $\mklab {\Lab_1}{\Lab_2}$ for $\Lab_1 \!\cup\! \Lab_2 $ if $\labels {\Lab_1} \!\cap\! \labels {\Lab_2} = \emptyset$. 
  We
shall also 
omit  curly  brackets in choices with 
   only one branch and trailing $\inact$ processes.
 
 \medskip
 A  {\em multiparty session}  is the parallel composition of pairs participants/processes. 
\begin{definition}[Multiparty Sessions]
A {\em multiparty session}
 $\N$ is defined by the following grammar:\\
\centerline{$
\begin{array}{lll}\N  &  ::=^{inductive}  & \pa\pp\PP  \quad \sep  \quad \N \pc\N \end{array}
$}
and it should 
satisfy the following conditions:\\
(a) 
 In  $\pa{\pp_1}{\PP_1}\pc \ldots\pc \pa{\pp_n}{\PP_n}$   
all the $\pp_i$'s ($1\leq i\leq n$) are distinct; \\
(b)  In $\pa{\pp}{\PP}$  we require $\pp\not\in\participantP\PP$ (we do not allow self-communication).
\end{definition}
We shall use 
   $\prod\limits_{1\leq i\leq  n} \pa{\pp_i}{\PP_i}$ as shorthand  for $\pa{\pp_1}{\PP_1}\pc \ldots\pc \pa{\pp_n}{\PP_n}$. \\ We define $\participantS{\pa\pp\PP}=\set\pp$ and $\participantS{\N\pc\N'}=\participantS{\N}\cup\participantS{\N'}$.

\paragraph*{Operational Semantics}

The structural congruence $\equiv$ between two
multiparty sessions  establishes that 
 parallel composition is commutative, associative and has neutral elements $\pa\pp{\inact}$ for any fresh $\pp$.

 The reduction for multiparty 
sessions allows participants to choose and
 communicate messages. 
 
\begin{definition}[LTS for Multiparty Sessions]\mylabel{slts}
The {\em   labelled transition   system   (LTS) 
for multiparty sessions}  
  is the closure under structural congruence of the reduction specified by the unique rule:\\
  \centerline{$
 \inferrule[\rulename{comm}]{  \labels{\Lab} \subseteq \labels{\Lab'} 
     }{
    \pa\pp  {\pout{\q}{  (\mklab{\ell. \PP}{\Lab} )  }}
    \;
     \pc \;
     \pa \pq {\pin{\pp}{  (\mklab{\ell. \Q}{\Lab'} )  }  }\pc\N
    \redparr{\pp \ell \pq}
    \pa\pp{\PP}\;\pc\;\pa\q\Q\pc\N
    }$}
\end{definition}
Rule \rulename{comm} makes  the communication  possible: 
 participant $\pp$ sends  message 
 $\ell$  to   participant $\q$. 
  This rule is   non-deterministic in the choice of messages.  
The condition $\labels{\Lab} \subseteq \labels{\Lab'}$
assures that  the sender  
can freely choose the message, 
since the receiver must offer  all sender messages 
and  possibly 
 more.  This allows us to distinguish in the operational semantics between internal and external choices. 
We  use
$\M \redparr{ \lambda } \M'$ as shorthand for $\M \redparr{ \pp \ell \q  } \M'$.
We sometimes omit the label writing    $\redpar$. 
As usual, $\red^*$ denotes  the reflexive and transitive closure of $\red$.

\begin{example}
 \mylabel{ex:M}
 {\em 
 Let us consider a system (inspired by a similar one in \cite{BdLH18}) with participants $\pp$, $\q$, and $\HH$ interacting according the following protocol.  Participant  $\pp$ keeps on sending text messages to q, which has to deliver them to  $\HH$.
After a message has been sent by  $\pp$, the next one can be sent only if the previous has been received by $\HH$ and its propriety of language  ascertained,  i.e if it does not contain, say, rude or offensive words. 
Participant $\HH$ acknowledges to $\q$ the propriety of language of a received text by means of the message $\textit{ack}$. In such a case $\q$ sends to $\pp$ an $\textit{ok}$ message so that $\pp$ can proceed by sending a further message. More precisely: 
\begin{enumerate}
\item 
 $\pp$ sends a text message to $\q$ in order to be delivered to $\HH$, which accepts only texts
possessing a good propriety of language;
\item  then $\HH$  either 
          \begin{enumerate}
              \item  sends an $\textit{ack}$ to $\q$ certifying the reception of the text and its propriety. In this case
                       $\q$ sends back to $\pp$ an $\textit{ok}$ message  and the protocol goes back to 1.,   so that  $\pp$ can proceed by sending a further text
                        message;    
           \item   sends  a $\textit{nack}$ message to inform $\q$ that the text has not the required propriety of 
                    language. In such a case $\q$ produces 
 \textit{transf} (a semantically invariant reformulation  of the text),  
                    sends it back to $\HH$  and the protocol goes to 2. again. 
                    Before doing that, $\q$ informs  $\pp$ (through the \textit{notyet} message) that 
                    the  text   has not been accepted yet and a reformulation has been requested;
           \item  sends  a $\textit{stop}$ message to inform $\q$ that no more text will be accepted. In such a case
                    $\q$ informs of that also $\pp$.
 \end{enumerate}
 
 \end{enumerate}
A multiparty session implementing this protocol is: $
\M= \pa \pp  \PP    \pc   \pa \q \Q     \pc      \pa \HH H  $
where\\ 
\centerline{
$\begin{array}{l@{\qquad \qquad}l}
\PP=\pout {\q}{\textit{text}}.P_1   & P_1=\pin {\q}{\set{\textit{ok}.P,\textit{notyet}.P_1,\textit{stop}}}\\
\Q= \pp?\textit{text}.\HH! \textit{text}. Q_1 &
 Q_1=\pin {\HH}{\set{\textit{ack}.\pout {\pp}{\textit{ok}}.Q,\textit{nack}.\pout {\pp}{\textit{notyet}}.\pout {\HH}{{\textit{transf}}.Q_1, \textit{stop}.\pp!\textit{stop}}}}\\
  H = \pin {\q}{\textit{text}}. H_1  &   H_1 =  \pout{\q}{\{ \textit{ack}.H, \textit{nack}.\pin {\q}{\textit{transf}.H_1}, \textit{stop} \}}
  \end{array}$
  }
}
\end{example}

We end this section by defining the property of 
lock-freedom for multiparty session as in  \cite{Kobayashi02}. 
Lock-freedom ensures both progress and no starvation (under fairness assumption). I.e. it  guarantees the 
absence of deadlock and that all participants willing to communicate can do it. Recall that $\pa\pp{\inact}$ is the neutral element of parallel composition.
\begin{definition}[
Lock-Freedom]\mylabel{d:lf}
We say that a multiparty session \emph{ $\M$ is a   
lock-free session} if
\begin{enumerate}[label=(\alph*)]   
\item\mylabel{d:lf1} 
$\M\red^*\M'$ implies either $\M'\equiv \pa\pp{\inact}$ or $\M'\redpar\M''$,  and 
\item\mylabel{d:lf2} 
 $\M\red^*\pa\pp\PP\pc\M'$ and $\PP\not=\inact$ imply $\pa\pp\PP\pc\M'\red^*\M''\redparr\lambda$ and $\pp$ occurs in $\lambda$.

\end{enumerate}
\end{definition}

\mysection{Global Types and Typing System}

The behaviour of multiparty sessions can be disciplined by means of types, as usual.  Global types describe the whole conversation scenarios of multiparty sessions.  As in \cite{DS} we
 directly  assign global types to
multiparty sessions without the usual  detour around  
session types 
and subtyping  \cite{HYC08,Honda2016}.

The type $\Gvti \pp \q \ell \S {\G} $ formalises a protocol where participant
$\pp$ must send  
to $\q$ a 
message $\ell_i$ for some $1 \leq i \leq n$ and
then, depending  on  which $\ell_i$ was chosen by $\pp$, the protocol
continues as $\G_i$.  We use $\Labgt$ as shorthand for $\setl {\ell}{\G} {n}$  and define  the multiset 
$\labels{\setl {\ell}{\G} {n}} = \Set{\ell_i\mid 1\leq i \leq n}$.  
As for processes, we define first \pre global types and then global types.

\begin{definition}[Global Types]
  \begin{enumerate}[label=(\roman*)]
  \item
We say that $\G$ is a {\em \pre global type}  and  $\Labgt$
is a {\em \pre choice of communications}  
if they are generated by 
the grammar: \\[1mm]
\centerline{$
 \G ::=^{coinductive} \  \tend \quad  \sep  \quad \mkt \pp \q \Labgt \
 \qquad\qquad  \Labgt  :=  \setl {\ell}{\G} {n}
$}
and  
 all messages in  $\labels{\Labgt}$   are pairwise distinct.
 \item The {\em tree representation} of a   \pre global type  
is built as follows:\
(a) each internal node is labelled by  $\pp \to \q$ 
and has as many children as the number of messages,\
(b) the edge from 
 $\pp \to \q$ to the child $\G_i$ is labelled by $\ell_i$ and \
(c) the leaves of the tree (if any) are labelled by $\tend$.
 \item
We say that  a \pre global type 
$\G$ is a {\em global type} if 
the tree representation of $\G$ is {\em regular}.
We say that a  \pre choice of communications 
$\Labgt$ is a {\em choice of communications}  
if all the 
\pre global types in $\Labgt$ are  global types. 
\end{enumerate}
\end{definition}

We identify  \pre global types  and global types with their 
tree representations 
and  we  shall  sometimes refer to the tree representation 
as the global types themselves. 
As for processes,  
the regularity condition
 implies that we only consider global types 
admitting a finite representation.

 The set $\participantG\G$ of participants of global type $\G$
is defined similarly to  that   
of 
processes. 
 The regularity of  global types  
 assures that the set of participants is finite.
 We  shall  write 
 $\mklab{\ell. \G}{ \Labgt}$ for $\{\ell. \G \} \cup \Labt$ if 
 $\ell \not \in \labels\Labgt$ and 
 $\mklab {\Labgt_1}{\Labgt_2}$ for $\Labgt_1 \cup \Labgt_2 $ if $\labels{\Labgt_1} \cap \labels{\Labgt_2} = \emptyset$. 
We  shall  omit  curly  
brackets in  choices  with only one branch  and trailing $\tend$s. 
 
 Since all messages  in communication choices  
are pairwise distinct, the set of  paths  in the trees representing  global types 
are determined by  the labels of nodes and edges  
found on the way, omitting  the leaf label $\tend$.   Let  $\path$ range over paths of global types. 
Formally  the set of paths of a global type can be defined as a set of sequences
as follows ($\emptypath$ is the empty sequence): \\
\centerline{$
\paths\tend = \{ \emptypath \}
\qquad\quad
\paths{\Gvti \pp \q \ell \S {\G} }
= \bigcup_{1 \leq i \leq n} \{ \pp \to \q  \, \ell_i \, \path \mid \path \in \paths {\G_i} \}
$}

\noindent
 Note that every infinite path  of   a global type 
 has infinitely many occurrences of $\rightarrow$.
 
 \begin{example}
 {\em
 \mylabel{ex:G}
A global type representing the protocol of Example \ref{ex:M} is:\\[1mm]
\centerline{$
\begin{array}{lcll}
\G  &=&  & \pp \to \q :   \textit{text} .  \q \to \HH  : \textit{text} .\G_1\\[1mm]
\G_1 &= &
& \HH \to \q : \{ \textit{ack} :    \q \to \pp : \ok.\G,\\
&&&  \phantom{\HH \to \q : \{    }   \textit{nack}:   \q \to \pp : \textit{notyet}. \q \to \HH : \textit{transf}. \G_1,\\
&&& \phantom{\HH \to \q : \{    }  \textit{stop} : \q \to \pp :  \textit{stop}
\}
\end{array}$}}
 \end{example}

In order to assure lock-freedom by typing we require that the first occurrences of participants in global types are at a bounded depth in all paths starting from the root. This is formalised by the following definition of $weight$. 
\begin{definition}[Weight] \mylabel{df:w}
 Let $ \depth{\path_1\,  (\q \to \pr) \, \ell \, \path_2}\pp=length (\path_1)$ if $\pp \not \in \path_1$ and 
  $\pp \in \{ \q, \pr \}$, then\\ 
\centerline{$
 \depth \G\pp = \begin{cases}
max \{ \depth \path\pp \mid \path
  \in \paths \G  \}& \text{if }\pp\in\participantG\G, \\
0      & \text{otherwise}
\end{cases}
 $}
 \end{definition}
\begin{example}\mylabel{ex:w}
{\em
If $\G$ is as in Example \ref{ex:G}, then $ \depth \G\pp = \depth \G\q =0$,  and $\depth \G\HH =1$.
 If $\G'=\pp\to\q:\set{\ell_1.\pr \to \pp:\ell_3, \ell_2.\G'}$, then $ \depth {\G'}\pr = \infty$.
}
\end{example}

 The standard projection of  global types 
onto participants produces session types and session types are assigned to processes by a type system~\cite{HYC08,Honda2016}. The present simplified shape of messages
allows us  to define a projection of 
 global types onto participants producing processes instead of local types.

 The projection of a global  type onto a participant 
 returns, if any, the process  that the participant 
should run to follow the protocol specified by  the global type.  
If the global type 
begins by establishing a communication  from  $\pp$ to $\q$, then
the projection onto $\pp$ 
should send  one 
message  to $\q$, and 
the projection onto $\q$  should receive one 
message  from $\pp$.
The  projection  onto a third participant $\pr$  skips the initial communication,
that does not involve her. 
 This implies that the behaviour of $\pr$ must be
independent of the branch chosen by $\pp$, that is the projections on $\pr$ of
all the branches must be the same. However, in case of projections yielding input processes
from the same sender,
we can allow the process of $\pr$ to combine all these processes, proviso the messages are all distinct.

\begin{definition}[Projection] \mylabel{definition:projection}
Given a global type $\G$ and a participant $\pp$, 
we define  the 
partial function $\proj{ \ \ }{\pp}$
coinductively as follows:\\ 
\centerline{$
\begin{array}{rcll}
\proj{\G}\pp & =&\inact   \mbox{ \quad if  $\pp \not \in \participantG \G$} \\
\projp {\Gvti \pp\q \ell \S {\G}} \pp & =& \pout\q  \setlp{\ell} {\G}{n}{\pp}     \\
\projp {\Gvti \q \pp \ell \S {\G}} \pp & =& \pin\q \setlp{\ell}{\G}{n}{\pp}     &\\
\projp {\Gvti \q\pr \ell \S {\G}} \pp & =&  \begin{cases}
  \proj{\G_1}\pp  
    &  \text{if } \pp \not \in \{\q, \pr \}  \text{ and }\proj{\G_i}\pp=\proj{\G_j}\pp (1\leq i,j\leq n) \\
\pin \s (\mklab {\Lab_1}   {\mklab{\ldots}{\Lab_n}} ) &
 \text{if  }\pp \not \in \{\q, \pr \},  \ \  
\proj {\G_i} \pp = \pin \s  {\Lab_i}  \ (1\leq i\leq n) \text{ and }\\[-1.5mm]
& 
 \labels{\Lab_i}\cap\labels{\Lab_j}=\emptyset \text{ for } 1\leq i\not= j \leq n 
\end{cases}\\[20pt]
\end{array}
$}
We say that $\proj \G \pp$ is  the 
{\em projection}
of $\G$ onto $\pp$ if $\proj \G \pp$ is defined.
We say that $\G$ is {\em projectable} if $\proj \G \pp$ is  defined 
for all participants $\pp$. 
\end{definition}
This projection is the coinductive version of the projection given in \cite{DezEtAlt16,GJPSY19},  where processes are replaced by local types.  

 As mentioned above, if $\pp$ is not involved in the first communication of $\G$, and $\G$ starts with a choice between distinct messages, then in all branches the process of participant $\pp$ must either behave in the same way or be 
  a different  input, so that $\pp$ can understand which branch was chosen.

 \begin{example}
 \mylabel{ex:projG}
 {\em 
The global type $\G$ of Example \ref{ex:G}  is projectable, and by projecting it we obtain $\proj \G \pp=P$, $\proj \G \q=Q$, $\proj \G \HH=H$, 
where $P$, $Q$, and $H$ 
are as defined in Example \ref{ex:M}.  Also the global type $\G'$ of Example \ref{ex:w} is projectable, $\proj{\G'}\pp=\pout\q{\set{\ell_1.\pin\pr{\ell_3},\ell_2.\proj{\G'}\pp}}$, $\proj{\G'}\q=\pin\pp{\set{\ell_1, \ell_2.\proj{\G'}\q}}$, $\proj{\G'}\pr=\pout\pp{\ell_3}$.  Notice that $\G'$ has two branches, the projection of the first branch onto $\pr$ is $\pout\pp{\ell_3}$,  the projection of the second branch onto $\pr$ is just the projection of $\G'$ onto $\pr$, so $\pout\pp{\ell_3}$ is the (coinductive) projection of $\G'$ onto $\pr$. 
 }
 \end{example}
 
 \begin{definition}[Well-formed Global Types]
 A global type $\G$ is well formed if $\depth \G\pp$ is finite and $\proj\G\pp$ is defined for all $\pp\in \participantG\G$. 
 \end{definition}
  The global type $\G$ of Example \ref{ex:G}  is well formed, while  the global type $\G'$ of Example \ref{ex:w} is not well formed. 
In the following {\bf  we only consider well-formed global types}.

\smallskip
\noindent
 To type multiparty sessions we use  the  preorder $\subt$ 
on processes below, inspired by the subtyping of  \cite{CDG19}.  
\begin{definition}[Structural Preorder]
\mylabel{definition:subt}
We define the {\em structural preorder on processes}, $\PP\subt\Q$, by coinduction:\\
\centerline{$\inferrule[\rulename{sub-$\inact$}]{}
  {\inact \subt \inact}\qquad
\begin{array}{@{}l@{}}
\cinferrule[\rulename{sub-in}]{
 \PP_i \subt \Q_i \quad \forall 1 \leq i \leq n}{
  \pin\pp  (\mklab{\setl{\ell}{\PP}{n}}{\Lab}) \subt \pin \pp  \setl{\ell}{\Q}{n}
}
 \qquad 
\cinferrule[\rulename{sub-out}]{
  \PP_i \subt \Q_i \quad \forall 1 \leq i \leq n}{
  \pout\pp  \setl{\ell}{\PP}{n} \subt \pout \pp  {\setl{\ell}{\Q}{n}}
}
\end{array}
$}
\end{definition}
 The double line in rules indicates
that the rules are interpreted  {\em coinductively}. 
Rule \rulename{sub-in} allows bigger processes to offer fewer inputs, while
Rule \rulename{sub-out} requires the output messages  to be the same. 
The regularity condition on processes is crucial to guarantee the termination of 
   algorithms  for checking  structural  preorder.  
As  it will be further  discussed in Remark \ref{rem:counterexs},  
 the current proof fails if we type processes used as gateways by means of 
 the preorder  $\subtp$\mylabel{subtp}  obtained by substituting 
 rule\\ 
 \centerline{$\cinferrule[\rulename{sub-out$^+$}]{
  \PP_i \subt \Q_i \quad \forall 1 \leq i \leq n}{
  \pout\pp  \setl{\ell}{\PP}{n} \subt \pout \pp (\mklab{ \setl{\ell}{\Q}{n}}\Lab)
}$}
 for
 rule $\rulename{sub-out}$.  
 
\smallskip

The typing judgments 
  associate global types to sessions: they  are of the shape $\der{}\N\G$. 

\begin{definition}[Typing system] \mylabel{definition:typing} The only typing rule is:\\
\centerline{$\inferrule[\rulename{t-sess}]{\forall i\in I\quad\T_i\subt\proj\G{\pp_i}  
    \quad \participantG \G\subseteq\{\pp_i\mid i\in I\}}
  {\der{}{\prod\limits_{i\in I}\pa {\pp_i}\PP_i}\G
  }$} 
 \end{definition}
This rule requires that 
 the processes in parallel can play as participants of a whole communication protocol or  they are  the terminated process, i.e. they are smaller or equal (according to the structural preorder) to the projections of a unique global type. %
  The condition $\participantG \G\subseteq\{\pp_i\mid i\in I\}$ allows to type also sessions containing $\pa\pp\inact$, a property needed to assure invariance of types under structural congruence. 
 Notice that this typing rule allows
  to type multiparty session only with global types which can be projected on all their participants.  
  A session $\M$ is {\em well typed}  
    if there exists $\G$ such that
  $\vdash \M : \G$. 
  
  \mysection{Properties of Well-Typed Sessions}
 We start with the  standard  
 lemmas of inversion and canonical form, easily following  from Rule \rulename{t-sess}.

\begin{lemma}[Inversion Lemma]\mylabel{il}
If $\der{}{\prod\limits_{i\in I}\pa {\pp_i}\PP_i}\G$, then $\T_i\subt\proj\G{\pp_i}$  for all $i\in I$  and  
$ \participantG \G\subseteq\{\pp_i\mid i\in I\}$.
\end{lemma}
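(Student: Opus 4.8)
The plan is to observe that the Inversion Lemma is essentially a restatement of the content of Rule \rulename{t-sess}, read from conclusion to premises. Since \rulename{t-sess} is the \emph{only} typing rule in the system (Definition \ref{definition:typing}), any derivation of the judgment $\der{}{\prod_{i\in I}\pa{\pp_i}\PP_i}\G$ must end with an application of this rule. The goal is therefore to show that the side conditions recorded in the premises of that rule application coincide with the two conclusions we must establish.

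First I would note that the typing system has a single rule, so the derivation is necessarily a one-step application of \rulename{t-sess}. To apply the rule so as to conclude $\der{}{\prod_{i\in I}\pa{\pp_i}\PP_i}\G$, the conclusion of the rule instance must syntactically match the given session $\prod_{i\in I}\pa{\pp_i}\PP_i$ and the given global type $\G$. Reading off the premises of \rulename{t-sess} for this instance, we immediately obtain both required facts: the premise $\forall i\in I.\ \T_i\subt\proj\G{\pp_i}$ gives $\T_i\subt\proj\G{\pp_i}$ for all $i\in I$, and the premise $\participantG\G\subseteq\{\pp_i\mid i\in I\}$ gives the participant-containment condition. There is nothing further to unfold, since the two conclusions of the lemma are verbatim the two premises of the rule.

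The only subtlety worth flagging is the role of structural congruence. Since the LTS and typing are intended to be used up to $\equiv$, one might worry that the session $\prod_{i\in I}\pa{\pp_i}\PP_i$ could be matched only after a structural rearrangement (commutativity and associativity of parallel composition, together with neutral elements $\pa\pp\inact$). However, because $\equiv$ merely reorders the components $\pa{\pp_i}{\PP_i}$ and inserts or removes terminated participants $\pa\pp\inact$, and because the premises of \rulename{t-sess} quantify uniformly over all $i\in I$ while the containment condition is insensitive to such reordering, the premises extracted from the rule are invariant under any such matching. Thus no genuine case analysis is required.

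I expect no real obstacle here: this is the routine inversion lemma that follows directly from having a single, syntax-directed typing rule, exactly as the text anticipates when it says the inversion and canonical-form lemmas follow \emph{easily} from Rule \rulename{t-sess}. The proof is a single appeal to the shape of \rulename{t-sess}, and the main thing to get right is simply to state that, the typing relation being defined by this one rule, its premises are both necessary and sufficient, so they may be read back off any valid typing judgment.
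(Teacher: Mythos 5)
Your proposal is correct and matches the paper's treatment: the paper gives no explicit proof, stating only that the inversion (and canonical form) lemmas follow easily from Rule \rulename{t-sess}, which is exactly your argument that the single syntax-directed rule forces any derivation to end with an instance of \rulename{t-sess} whose premises are verbatim the lemma's conclusions. Your remark on structural congruence is a harmless extra precaution, not needed since the lemma's statement already fixes the session in the shape matched by the rule.
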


\begin{lemma}[Canonical Form Lemma]\mylabel{cfl}\hfill\\
If $\der{}{\N}\G$ and $ \participantG \G=\{\pp_i\mid i\in I\}$, then $\N\equiv\prod\limits_{i\in I}\pa {\pp_i}\PP_i$ and $\T_i\subt\proj\G{\pp_i}$ for all $i\in I$. 
\end{lemma}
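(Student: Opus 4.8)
The plan is to read off the shape of $\N$ from the typing derivation and then discard the components whose participant does not occur in $\G$, by showing that each such component must be the inactive process $\inact$ and hence removable up to structural congruence.

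First I would observe that, since sessions are parallel compositions of participant/process pairs, I can write $\N=\prod_{j\in J}\pa{\pq_j}{\Q_j}$ for a finite index set $J$ with pairwise distinct participants $\pq_j$ (condition (a) of the definition of multiparty sessions). Applying the Inversion Lemma (\cref{il}) to $\der{}{\N}{\G}$ then yields $\Q_j\subt\proj{\G}{\pq_j}$ for every $j\in J$, together with $\participantG\G\subseteq\{\pq_j\mid j\in J\}$. Let $J_\circ=\{j\in J\mid\pq_j\notin\participantG\G\}$ collect exactly the indices of the ``extra'' components.

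The key step is to analyse the components indexed by $J_\circ$. For $j\in J_\circ$ the first clause of \cref{definition:projection} gives $\proj{\G}{\pq_j}=\inact$, so $\Q_j\subt\inact$. Inspecting \cref{definition:subt}, among the three rules only \rulename{sub-$\inact$} has $\inact$ on its right-hand side, and its left-hand side is $\inact$ as well; since any judgment in the coinductively defined preorder must be the conclusion of one of the rules, this single inversion forces $\Q_j=\inact$. Thus every component whose participant lies outside $\participantG\G$ is inactive. This is the one genuine obstacle in the proof: once $\Q\subt\inact\Rightarrow\Q=\inact$ is established, the rest is bookkeeping.

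Finally I would use that $\pa{\pq_j}{\inact}$ is a neutral element of parallel composition. Because the $\pq_j$ are pairwise distinct, each $\pq_j$ with $j\in J_\circ$ is fresh for $\prod_{j\in J\setminus J_\circ}\pa{\pq_j}{\Q_j}$, so repeated use of structural congruence removes all inactive extra components, giving $\N\equiv\prod_{j\in J\setminus J_\circ}\pa{\pq_j}{\Q_j}$. The surviving participants are exactly $\{\pq_j\mid j\in J\setminus J_\circ\}=\participantG\G=\{\pp_i\mid i\in I\}$, and since all participant names are distinct there is a bijection between $J\setminus J_\circ$ and $I$ matching $\pp_i=\pq_j$. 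Setting $\PP_i=\Q_j$ for the matched indices yields $\N\equiv\prod_{i\in I}\pa{\pp_i}{\PP_i}$ with $\T_i=\Q_j\subt\proj{\G}{\pq_j}=\proj{\G}{\pp_i}$, which is precisely the claim.
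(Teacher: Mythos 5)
Your proof is correct and takes the route the paper intends: the paper gives no explicit proof, stating only that this lemma and the Inversion Lemma are ``easily following from Rule \rulename{t-sess}'', and your argument (inversion, then discarding the components whose participants lie outside $\participantG\G$, then structural congruence) is precisely that omitted bookkeeping. In particular, you correctly isolated and justified the one non-trivial point, namely that $\Q\subt\inact$ forces $\Q=\inact$, since \rulename{sub-$\inact$} is the only rule of the structural preorder with $\inact$ on its right-hand side.
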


 To formalise the properties of Subject Reduction and Session
Fidelity \cite{HYC08,Honda2016},  we use the standard LTS for global types
given  below. 
Rule \rulename{Icomm} is justified by the fact that in
a projectable global type $\mkt \pr \ps \Labgt $, the behaviours of a
participant $\pp$ different from $\pr$
and $\ps$ and starting with an output are the same in all branches, and hence they are
independent from the choice of $\pr$, and may be executed before it.

\begin{definition}[LTS for Global Types]
The {\em   labelled transition   system   (LTS) 
for global types}  
  is specified by the rules:\\
 \centerline{$
 \inferrule[\rulename{ecomm}]{}{ \mkt \pp \q (\mklab{\ell.\G} \Labgt) \redparr{\pp \ell \pq}\G}
 \qquad\quad
 \inferrule[\rulename{icomm}]{  \G_i \redparr{\pp \ell \pq}\G'_i \quad\set{\pp,\q}\cap\set{\pr,\ps}=\emptyset\quad\text{\em for all }i~(1\leq i\leq n)
     }{
   \Gvti \pr \ps \ell \S {\G}
    \redparr{\pp \ell \pq}
   \Gvti \pr \ps \ell \S {\G'}
    }
    $}
    \end{definition}
    
    The following lemma relates projections and reductions of global types.
    \begin{lemma}[Key Lemma]\mylabel{kl}
 \begin{enumerate}[label=(\roman*)] \item\mylabel{kl1} If $\proj\G\pp=\pout\q\Lab$ and $\proj\G\q=\pin\pp{\Lab'}$, then $\labels\Lab=\labels{\Lab'}$.  Moreover  $\G\redparr{\pp \ell \pq}\G^\ell$ and $\ell.\proj{\G^\ell}\pp\in\Lab$ and $\ell.\proj{\G^\ell}\q\in\Lab'$ for all $\ell\in\labels\Lab$. 
\item\mylabel{kl2} If $\G\redparr{\pp \ell \pq}\G'$, then $\proj\G\pp=\pout\q\Lab$ and   $\proj\G\q=\pin\pp{\Lab'}$ and $\ell\in\labels\Lab=\labels{\Lab'}$. 
\end{enumerate}
\end{lemma}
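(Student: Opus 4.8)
The plan is to prove the two items independently, each by a short induction once the correct induction measure is fixed; in both cases the only real work is to rule out the input-merging case of the projection, and the engine for this is the pairwise distinctness of labels in a choice.

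I would treat \ref{kl2} first, by induction on the derivation of $\G\redparr{\pp\ell\q}\G'$. If the last rule is \rulename{ecomm}, then $\G=\mkt\pp\q{(\mklab{\ell.\G'}\Labgt)}$, and the output and input clauses of the projection give directly $\proj\G\pp=\pout\q\Lab$ and $\proj\G\q=\pin\pp{\Lab'}$ with $\ell\in\labels\Lab=\labels{\Lab'}$ (both equal the set of head labels). If the last rule is \rulename{icomm}, then $\G=\Gvti\pr\ps{m}\S\G$ with $\set{\pp,\q}\cap\set{\pr,\ps}=\emptyset$ and $\G_i\redparr{\pp\ell\q}\G'_i$ for every $i$; the induction hypothesis yields $\proj{\G_i}\pp=\pout\q{\Lab_i}$ and $\proj{\G_i}\q=\pin\pp{\Lab'_i}$ with $\ell\in\labels{\Lab_i}=\labels{\Lab'_i}$. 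Since $\pp,\q\notin\set{\pr,\ps}$, both projections of $\G$ are computed by the third-party clause: the outputs $\proj{\G_i}\pp$ cannot be merged, so projectability forces them to coincide and $\proj\G\pp=\pout\q{\Lab_1}$; and since each $\labels{\Lab'_i}$ contains $\ell$, the sets $\labels{\Lab'_i}$ are not pairwise disjoint, so the merging case is excluded and the inputs coincide as well, giving $\proj\G\q=\pin\pp{\Lab'_1}$ with $\ell\in\labels{\Lab_1}=\labels{\Lab'_1}$.

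For \ref{kl1} the measure is $\weight\G\pp$, finite by well-formedness, and I would use strong induction on it, splitting on whether $\pp$ occurs in the head communication. If it does, then as $\proj\G\pp$ is an \emph{output} towards $\q$, participant $\pp$ must be the head sender and $\q$ the head receiver, i.e. $\G=\Gvti\pp\q\ell\S\G$; now $\Lab$ and $\Lab'$ are read off directly, $\labels\Lab=\labels{\Lab'}$ is the set of head labels, and each step $\G\redparr{\pp\ell_i\q}\G_i$ (rule \rulename{ecomm}) supplies $\G^{\ell_i}=\G_i$ with $\ell_i.\proj{\G_i}\pp\in\Lab$ and $\ell_i.\proj{\G_i}\q\in\Lab'$ by definition of the two projections. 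Otherwise the head participants $\pr,\ps$ satisfy $\pp,\q\notin\set{\pr,\ps}$, every branch has $\weight{\G_i}\pp<\weight\G\pp$, and the third-party clause applies. The output $\proj\G\pp$ forces the all-equal case, so $\proj{\G_i}\pp=\pout\q\Lab$ for all $i$; writing $\proj{\G_i}\q=\pin\pp{\Lab'_i}$, the induction hypothesis gives $\labels{\Lab'_i}=\labels\Lab$, which is nonempty, so the $\labels{\Lab'_i}$ cannot be pairwise disjoint and the inputs coincide too, $\proj{\G_i}\q=\pin\pp{\Lab'}$. The induction hypothesis applied to each $\G_i$ then yields $\labels\Lab=\labels{\Lab'}$ and, for every $\ell\in\labels\Lab$, a reduction $\G_i\redparr{\pp\ell\q}\G_i^\ell$; rule \rulename{icomm} assembles these into $\G\redparr{\pp\ell\q}\G^\ell$, where $\G^\ell$ is $\G$ with each branch $\G_i$ replaced by $\G_i^\ell$.

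The step I expect to be the main obstacle, and would write out in full, is the propagation of the two membership conditions through \rulename{icomm} in \ref{kl1}. Here the distinctness of labels is decisive: the element of $\Lab$ carrying the label $\ell$ is unique, and the induction hypothesis places $\ell.\proj{\G_i^\ell}\pp$ into the \emph{same} set $\Lab$ for every branch $i$, so the continuations $\proj{\G_i^\ell}\pp$ must all be equal; hence the third-party clause gives $\proj{\G^\ell}\pp=\proj{\G_1^\ell}\pp$ and therefore $\ell.\proj{\G^\ell}\pp\in\Lab$, and the symmetric argument with $\Lab'$ settles $\q$. This same uniqueness-of-label observation is what, throughout both items, collapses the input-merging clause of the projection to the all-equal case, so that the branch projections genuinely agree rather than merely overlap.
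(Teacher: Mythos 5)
Your proof is correct and follows essentially the same route as the paper: part (i) by induction on $\depth\G\pp$ splitting on whether $\pp$ occurs in the head communication, and part (ii) by an induction whose \rulename{ecomm}/\rulename{icomm} case analysis coincides with the paper's induction on $\depth\G\pp$ with cases on the reduction rules. The only difference is presentational: you make explicit (via non-disjointness of the branch label sets and uniqueness of labels within a choice) why the input-merging clause of the projection collapses to the all-equal case, a point the paper compresses into ``by definition of projection.''
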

\begin{myproof}  
\ref{kl1}. The proof is by  induction on  $n= \depth{\G}{\pp}$.
If $n=0$,  then   we have  $\G=\mkt \pp \q \Labgt$ and $\labels\Labgt=\labels\Lab=\labels{\Lab'}$ 
 and $\ell.\proj{\G^\ell}\pp\in\Lab$ and $\ell.\proj{\G^\ell}\q\in\Lab'$   by definition of projection.
If $n >0$, then $\G=\Gvti \pr \ps \ell \S {\G}$ and  $\{\pp, \q \} \cap   \{\pr, \s \} = \emptyset$ and $\proj{\G_i}\pp=\pout\q\Lab$ and $\proj{\G_i}\q=\pin\pp{\Lab'}$ for all $i$, $1\leq i\leq n$, by definition of projection.  By  the  induction hypothesis, $\labels\Lab=\labels{\Lab'}$.  Moreover, again by  the  induction  hypothesis,  $\G_i\redparr{\pp \ell \pq}\G_i^\ell$ and   $\ell.\proj{\G_i^\ell}\pp\in\Lab$ and $\ell.\proj{\G_i^\ell}\q\in\Lab'$ for all $i$, $1\leq i\leq n$. We get $\G\redparr{\pp \ell \pq}\G^\ell$  using rule $\rulename{icomm}$, where $\G^\ell=\Gvti \pr \ps \ell \S {\G^\ell}$.  The definition of projection implies   $\proj{\G^\ell}\pp=\proj{\G_1^\ell}\pp$ and $\proj{\G^\ell}\q=\proj{\G_1^\ell}\q$.\\  
 \ref{kl2}. The proof is by  induction on  $\depth{\G}{\pp}$ and by cases on the reduction rules.\\
 The case of  rule $\rulename{ecomm}$ is easy. For  rule    $ \rulename{icomm}$, 
 by the induction hypothesis, $\proj{\G_i}\pp=\pout\q{\Lab_i}$ and  $\proj{\G_i}\q=\pin\pp{\Lab'_i}$ and $\ell\in\labels{\Lab_i}=\labels{\Lab'_i}$  for all $i$, $1\leq i\leq n$. By definition of projection $\proj{\G_i}\pp=\proj{\G_j}\pp$ and  $\proj{\G_i}\pq=\proj{\G_j}\pq$  for $1\leq i,j\leq n$.  Again by definition of projection  $\proj{\G}\pp=\proj{\G_1}\pp$ and $\proj{\G}\q=\proj{\G_1}\q$.  \QED
\end{myproof}

\medskip

Subject Reduction says that the transitions of well-typed sessions are mimicked by those of global types. 
\begin{theorem}[Subject Reduction]
\mylabel{thm:SR}
If  $\der{}\M\G$ 
and  $\M \redparr{\pp \ell \q} \M'$, then $\G\redparr{\pp \ell \q}\G'$
and 
$\der{} {\M'} {\G'}$.
\end{theorem}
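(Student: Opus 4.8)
The plan is to combine the operational shape forced by rule \rulename{comm} with the Inversion Lemma and the Key Lemma, reading the reduction of $\M$ back into a reduction of $\G$ and then re-typing $\M'$ with the reduct $\G'$. First I would use that the LTS is closed under $\equiv$ and that typing is invariant under $\equiv$: the step $\M\redparr{\pp\ell\q}\M'$ is, up to structural congruence, an instance of \rulename{comm}, so we may assume $\M=\pa\pp{\pout\q{(\mklab{\ell.\PP}\Lab)}}\pc\pa\q{\pin\pp{(\mklab{\ell.\Q}{\Lab'})}}\pc\N$ with $\labels\Lab\subseteq\labels{\Lab'}$, and $\M'=\pa\pp\PP\pc\pa\q\Q\pc\N$. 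By Lemma~\ref{il} the component at $\pp$ satisfies $\pout\q{(\mklab{\ell.\PP}\Lab)}\subt\proj\G\pp$ and the one at $\q$ satisfies $\pin\pp{(\mklab{\ell.\Q}{\Lab'})}\subt\proj\G\q$, each component of $\N$ lies below its own projection, and $\participantG\G\subseteq\participantS\M$. Since an output can only be related by \rulename{sub-out} and an input only by \rulename{sub-in}, this forces $\proj\G\pp=\pout\q{\Labt_1}$ and $\proj\G\q=\pin\pp{\Labt_2}$.

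Next I would produce the matching global reduction. By Lemma~\ref{kl}\ref{kl1} we obtain $\labels{\Labt_1}=\labels{\Labt_2}$ and $\G\redparr{\pp\ell\q}\G^\ell$ with $\ell.\proj{\G^\ell}\pp\in\Labt_1$ and $\ell.\proj{\G^\ell}\q\in\Labt_2$; I set $\G'=\G^\ell$, which is the required reduct. Re-typing $\pp$ and $\q$ is then immediate: by \rulename{sub-out} the $\ell$-continuation $\PP$ of the sender lies below the $\ell$-continuation of $\proj\G\pp$, which is exactly $\proj{\G'}\pp$, so $\PP\subt\proj{\G'}\pp$; and since $\ell\in\labels{\Labt_1}=\labels{\Labt_2}$, rule \rulename{sub-in} gives $\Q\subt\proj{\G'}\q$.

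The crux is re-typing every third participant $\pt\notin\{\pp,\q\}$, whose process $R$ is left untouched by the step and only satisfies $R\subt\proj\G\pt$. Here I would prove, as an auxiliary fact by induction on $\depth\G\pp$ following the two rules \rulename{ecomm} and \rulename{icomm}, that $\proj{\G'}\pt$ is defined and $\proj\G\pt\subt\proj{\G'}\pt$; the desired $R\subt\proj{\G'}\pt$ then follows by transitivity of $\subt$. For \rulename{ecomm}, where $\G=\mkt\pp\q\Labgt$ reduces to one of its branches $\G'$, the projection clause for the non-participant $\pt$ gives either that all branches project the same onto $\pt$, whence $\proj\G\pt=\proj{\G'}\pt$, or that they are inputs $\pin\s{\Lab_i}$ from a common sender merged into $\proj\G\pt=\pin\s{(\Lab_1\uplus\cdots\uplus\Lab_n)}$, in which case $\proj{\G'}\pt=\pin\s{\Lab_i}$ for the selected branch and $\proj\G\pt\subt\proj{\G'}\pt$ holds by \rulename{sub-in} since $\Lab_i$ is one of the merged components. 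For \rulename{icomm}, with $\G=\Gvti\pr\ps\ell\S\G$, $\G'=\Gvti\pr\ps\ell\S{\G'}$ and $\{\pp,\q\}\cap\{\pr,\ps\}=\emptyset$, if $\pt\in\{\pr,\ps\}$ the outer prefix is preserved and the claim follows from the induction hypothesis on each branch through \rulename{sub-out} or \rulename{sub-in}, while if $\pt\notin\{\pr,\ps\}$ it follows from the induction hypothesis together with the merging clause of the projection.

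Finally, since reduction can only erase participants, $\participantG{\G'}\subseteq\participantG\G\subseteq\participantS{\M'}$, so rule \rulename{t-sess} applies and yields $\der{}{\M'}{\G'}$. I expect the main obstacle to be precisely the third-participant step: beyond the inequality $\proj\G\pt\subt\proj{\G'}\pt$ one must also guarantee that $\G'$ stays projectable, and the delicate case is the input-merging clause, where the branches of $\G$ project onto $\pt$ to distinct but mergeable inputs. One has to check that performing the same communication $\pp\ell\q$ inside all branches keeps the branch projections either equal or still mergeable, so that $\proj{\G'}\pt$ remains defined. The only other ingredient, transitivity of the structural preorder $\subt$, is a routine coinductive property.
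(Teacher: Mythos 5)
Your proposal is correct and follows essentially the same route as the paper's proof: decomposition of the step via \rulename{comm} and structural congruence, the Inversion Lemma to place each component below its projection, the shapes forced by \rulename{sub-out}/\rulename{sub-in}, the Key Lemma to obtain $\G\redparr{\pp\ell\q}\G'$ and to re-type sender and receiver, an auxiliary induction giving $\proj\G\pt\subt\proj{\G'}\pt$ for every uninvolved participant $\pt$, and \rulename{t-sess} to conclude. The one genuine difference is the induction measure in that auxiliary step: the paper inducts on $\depth\G{\pr_j}$, the weight of the \emph{third participant}, whereas you induct on $\depth\G\pp$, the weight of the \emph{sender}. Your choice is actually the more robust one: in the \rulename{icomm} case the third participant may itself be one of the two outer participants $\pr,\ps$, whose weight in $\G$ is $0$ and does not decrease when passing to the branches, so the paper's induction hypothesis is not directly applicable in that subcase; by contrast the sender's weight strictly decreases there, because the side condition of \rulename{icomm} forces $\pp\notin\set{\pr,\ps}$ (this is also the measure the paper itself uses to prove the Key Lemma). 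Finally, the obligation you flag at the end --- that $\G'$ must remain projectable on uninvolved participants, the delicate point being the input-merging clause --- is a real one, but the paper discharges it no more explicitly than you do (it simply asserts ``in both cases $\proj\G{\pr_j}\subt\proj{\G'}{\pr_j}$''), so on this point your proposal is no weaker than the published argument.
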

\begin{myproof} 
  If $\M \redparr{\pp \ell \q} \M'$, then\\ 
 \centerline{$
\begin{array}{lll}
\M  & \equiv  & 
 \pa{\pp}  \pout{\q}{ (\mklab{\ell.\PP}{\Lab}) }
    \; \pc \;\pa{\q}{\pin {\pp} { (\mklab{\ell.\Q}{\Lab'} }})  \; \pc \; 
    \prod\limits_{1\leq j\leq m} \pa{\pr_j}{\R_j}
    \\[-1ex]
\M' & \equiv &
    \pa{\pp}{\PP}\;\pc\;\pa{\q}{\Q} \; \pc \;  \prod\limits_{1\leq j\leq m} \pa{\pr_j}{\R_j} 
\end{array}
$}
 Since $\der{}\M\G$,  by Lemma~\ref{il}  
 we have that 
 $\pout{\q}{ (\mklab{\ell.\PP}{\Lab})}  \subt  \proj \G {\pp}$,
$\pin {\pp} { (\mklab{\ell.\Q}{\Lab'}) } \subt \proj \G {\q}$
and
$\R_j \subt \proj \G {\pr_j}$  $(1\leq j\leq m)$. 
By definition of $\subt$, from $\pout{\q}{ (\mklab{\ell.\PP}{\Lab})}  \subt  \proj \G {\pp}$
we get  $\proj \G {\pp}=\pout{\q} {(\mklab{\ell.\PP_0}{\Lab_0})}$  and $\PP\subt\PP_0$.
Similarly from $\pin {\pp} { (\mklab{\ell.\Q}{\Lab'}) } \subt \proj \G {\q}$ we get
$\proj \G {\q}=\pin {\pp} { (\mklab{\ell.\Q_0}{\Lab'_0}) }$ and $\Q\subt\Q_0$.
 Lemma~\ref{kl}\ref{kl1}  implies $\G\redparr{\pp \ell \q}\G'$ and $\proj{\G'}\pp=\PP_0$ and $\proj{\G'}\q=\Q_0$. 
  We show  $\proj \G {\pr_j}\subt\proj {\G'} {\pr_j}$ for each $j$, $1\leq j\leq m$ by induction on $\depth\G{\pr_j}$ and by cases on the reduction rules.  For rule $\rulename{ecomm}$ we get $\G=\pp\to\q:(\mklab{\ell.\G'} \Labgt)$. By Definition  \ref{definition:projection} either 
 $\proj \G {\pr_j}=\proj {\G'} {\pr_j}$ or $\proj \G {\pr_j}\subt\proj {\G'} {\pr_j}$. For rule $\rulename{icomm}$  $\proj {\G_i} {\pr_j}\subt\proj {\G'_i} {\pr_j}$ for $1\leq i\leq n $ by  the  induction hypothesis. In both cases $\proj \G {\pr_j}\subt\proj {\G'} {\pr_j}$. 
  We conclude $\der{} {\M'} {\G'}$.  \QED
 \end{myproof}
 
 \smallskip
 
 Session fidelity assures that the communications in a session typed by a global type are done as prescribed by the global  type. 
 \begin{theorem}[Session Fidelity]\mylabel{sft}
 Let $ \der{}\M\G$. 
\begin{enumerate}[label=(\roman*)]
\item \mylabel{sft1}If $\M \redparr{\pp \ell \q} \M'$, then $\G\redparr{\pp \ell \q}\G'$ and $\der{} {\M'} {\G'}$.  
\item \mylabel{sft2} If 
$\G \redparr{\pp \ell \q} \G'$, then  $\M\redparr{\pp \ell \q}\M'$   and $\der{} {\M'} {\G'}$. 
\end{enumerate}
\end{theorem}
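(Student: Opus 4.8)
The plan is to dispatch the two parts by quite different means. Part \ref{sft1} is word-for-word the statement of Subject Reduction, so it is immediate from Theorem \ref{thm:SR} and requires no further argument. All the content is in part \ref{sft2}, the converse direction: a communication prescribed by the global type must be executable by the session, landing in a state typed by the reduct of that global type.

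For part \ref{sft2} I would start from $\der{}\M\G$ together with $\G\redparr{\pp\ell\q}\G'$. First I apply the Canonical Form Lemma (Lemma \ref{cfl}) with the index set $I$ ranging exactly over $\participantG\G$, writing $\M\equiv\prod_{i\in I}\pa{\pp_i}{\PP_i}$ with $\PP_i\subt\proj\G{\pp_i}$ for all $i$. Next, Key Lemma \ref{kl}\ref{kl2} applied to the global transition pins down the two relevant projections: $\proj\G\pp=\pout\q\Lab$, $\proj\G\q=\pin\pp{\Lab'}$, and crucially $\ell\in\labels\Lab=\labels{\Lab'}$. Since neither projection is $\inact$, both $\pp$ and $\q$ lie in $\participantG\G$, hence occur among the $\pp_i$, and I obtain $\PP_\pp\subt\pout\q\Lab$ and $\PP_\q\subt\pin\pp{\Lab'}$.

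I then unfold the definition of $\subt$. Rule \rulename{sub-out} forces $\PP_\pp=\pout\q{(\mklab{\ell.\PP'}{\Lab_0})}$ with $\labels{\Lab_0}\cup\{\ell\}=\labels\Lab$, i.e. the output keeps exactly the message set of $\Lab$; rule \rulename{sub-in} forces $\PP_\q=\pin\pp{(\mklab{\ell.\Q'}{\Lab_0'})}$ with $\labels{\Lab'}\subseteq\{\ell\}\cup\labels{\Lab_0'}$, since a smaller input offers at least as many branches. The side condition of rule \rulename{comm}, namely that the sender's message set is included in the receiver's, now holds because $\{\ell\}\cup\labels{\Lab_0}=\labels\Lab=\labels{\Lab'}\subseteq\{\ell\}\cup\labels{\Lab_0'}$. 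Therefore rule \rulename{comm} fires and $\M\redparr{\pp\ell\q}\M'$, where $\M'$ replaces $\PP_\pp$ and $\PP_\q$ by their $\ell$-continuations $\PP'$ and $\Q'$ and leaves every other participant unchanged. This enabling step, not the subsequent typing, is the part I expect to carry the weight: it depends entirely on the equality $\labels\Lab=\labels{\Lab'}$ supplied by the Key Lemma together with the asymmetry of $\subt$ (outputs preserve their message set, while the input of a smaller process may only gain branches).

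It remains to type $\M'$ with the specific $\G'$ given in the hypothesis. For this I would invoke Theorem \ref{thm:SR} on the reduction $\M\redparr{\pp\ell\q}\M'$ just produced: it yields some $\G''$ with $\G\redparr{\pp\ell\q}\G''$ and $\der{}{\M'}{\G''}$. The final move is to identify $\G''$ with $\G'$, i.e. to observe that the global LTS is deterministic on labels. This is a short auxiliary fact: for a fixed label $\pp\ell\q$, rules \rulename{ecomm} and \rulename{icomm} are mutually exclusive (the former requires the topmost interaction to be exactly $\pp\to\q$, the latter requires it disjoint from $\set{\pp,\q}$), and in each case the reduct is uniquely determined — by distinctness of messages for \rulename{ecomm}, and by induction for \rulename{icomm}. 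Hence $\G''=\G'$ and $\der{}{\M'}{\G'}$. If one preferred to sidestep determinism, the alternative is to type $\M'$ against $\G'$ directly, using Key Lemma \ref{kl}\ref{kl1} to compute the continuations $\proj{\G'}\pp$, $\proj{\G'}\q$ and the projection-monotonicity argument already used in the proof of Theorem \ref{thm:SR} for the remaining participants; invoking Subject Reduction plus determinism is simply the more economical route.
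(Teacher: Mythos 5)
Your proposal is correct, and for the core of part \ref{sft2} --- the enabling step --- it coincides with the paper's own proof: both use Lemma~\ref{cfl} to put $\M$ in canonical form, Lemma~\ref{kl}\ref{kl2} to obtain $\proj\G\pp=\pout\q\Lab$, $\proj\G\q=\pin\pp{\Lab'}$ with $\ell\in\labels\Lab=\labels{\Lab'}$, and the asymmetry of $\subt$ (outputs keep their message set, inputs of smaller processes may only gain branches) to discharge the side condition of rule \rulename{comm}. Where you genuinely diverge is the concluding typing step. The paper does not re-invoke Subject Reduction: it applies Lemma~\ref{kl}\ref{kl1} once more to get $\ell.\proj{\G'}\pp\in\Lab$ and $\ell.\proj{\G'}\q\in\Lab'$, so that unfolding $\subt$ already yields $\PP'\subt\proj{\G'}\pp$ and $\Q'\subt\proj{\G'}\q$, and then types $\M'$ against $\G'$ directly (the uninvolved participants are handled by the monotonicity $\proj \G {\pr}\subt\proj {\G'} {\pr}$ established inside the proof of Theorem~\ref{thm:SR}, a step the paper leaves implicit here). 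Your route --- Subject Reduction applied to the freshly produced session reduction, followed by determinism of the global LTS --- is equally valid and has the advantage of treating all participants uniformly, at the price of an auxiliary lemma the paper never states. Two remarks on that lemma. First, it is not gratuitous extra work: the paper itself implicitly assumes it, since its appeal to Lemma~\ref{kl}\ref{kl1} silently identifies the reduct $\G^\ell$ produced by that lemma with the given $\G'$; making determinism explicit is an improvement in rigour. Second, be careful with ``by induction for \rulename{icomm}'': global types are infinite regular trees, so the induction cannot be on the structure of $\G$; it should be on a finite measure such as $\depth\G\pp$ (Definition~\ref{df:w}), which is finite by well-formedness and strictly decreases when passing to the branches of an \rulename{icomm} step --- exactly the measure the paper uses in the proof of Lemma~\ref{kl}.
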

\begin{myproof}
\ref{sft1}. It is the Subject Reduction Theorem.\\
\ref{sft2}. By Lemma~\ref{kl}\ref{kl2},  
$\proj\G\pp=\pout\q\Lab$  and   $\proj\G\q=\pin\pp{\Lab'}$ and $\ell\in\labels\Lab=\labels{\Lab'}$. 
  By Lemma~\ref{kl}\ref{kl1}, $\ell.\proj{\G'}\pp\in\Lab$ and $\ell.\proj{\G'}\q\in\Lab'$. 
By Lemma~\ref{cfl},
$\M\equiv\pa\pp{\PP}\; \pc \;\pa\q{\Q}\; \pc \;\M_0$ and 
$\PP\subt\proj\G\pp$ and $\Q\subt\proj\G\q$. 
 By definition of $\subt$ we get   $\PP=\pout{\q}{(\mklab{\ell.\PP'}{ \Lab_1})}$ with   $\labels{\Lab} = \set\ell\cup\labels{\Lab_1}$  and  $\PP'\subt\proj{\G'}\pp$,  
 and $\Q=\pin {\pp} { (\mklab{\ell.\Q'}{ \Lab_2}) }$ with 
  $\labels{\Lab_2}\cup \set\ell \supseteq \labels{\Lab'}$ and 
 $\Q'\subt\proj{\G'}\q$. 
 Hence {$\M \redparr{\pp \ell  \q  }
\pa\pp{\PP'} \pc \pa\q{\Q'} \pc \M_0\ =\ \M'$} and $\der{} {\M'} {\G'}$.  \QED
  \end{myproof} 
 
 \medskip
 
 Let $\vdash^+$ be the typing system obtained by using $\subtp$  (as defined on page \pageref{subtp})  in rule $\rulename{t-sess}$. 
  Session Fidelity for $\vdash^+$ is weaker than for $\vdash$.  If $\N=\pa\pp\pout\q.\ell_1\pc\pa\q\pin\pp\set{\ell_1,\ell_2}$ and  $\G=\pp\to\q:\set{\ell_1,\ell_2}$, then $\derp{}{\N}{\G}$ and  $\G\redparr{\pp \ell_i \q}\tend$ with $i=1,2$, but the only reduction of $\N$ is $\N\redparr{\pp \ell_1 \q}\pa\pp\inact$. Notice that  $\pout\q.\ell_1 \subtp
\proj\G\pp$ but  $\pout\q.\ell_1\not\subt
\proj\G\pp$ and $\pin\pp\set{\ell_1,\ell_2}=\proj\G\q$.\\
 Clearly $\der{}\N{\G}$ implies $\derp{}\N\G$,  and  a weakening of  the vice versa is shown below.  
\begin{theorem}
If $\derp{}\N\G$, then $\der{}\N{\G'}$ for some $\G'$.
\end{theorem}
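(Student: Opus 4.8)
The plan is to obtain $\G'$ by \emph{pruning} from $\G$ every output branch that the corresponding process does not actually offer, and then to check that the pruned type is well formed and types $\N$ under the stricter system. Writing $\N=\prod_{i\in I}\pa{\pp_i}{\PP_i}$, inversion on the single rule \rulename{t-sess} (now read with $\subtp$) gives $\PP_i\subtp\proj\G{\pp_i}$ for all $i$ and $\participantG\G\subseteq\{\pp_i\mid i\in I\}$. The only slack of $\subtp$ over $\subt$ is rule \rulename{sub-out$^+$}, by which a process may offer a strict subset of the outputs prescribed by its projection; so the whole point is to delete exactly those supernumerary branches from $\G$.

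I would define $\G'$ coinductively, carrying a residual map $R$ from participants to the processes they still have to run, with the invariant $R(\pp)\subtp\proj\G\pp$ for every $\pp$ (initially $R(\pp_i)=\PP_i$). For $\tend$ put $\G'=\tend$. For $\mkt\pp\q{\setl\ell\G n}$, the invariant and the definition of projection force $R(\pp)=\pout\q{\{\ell_j.S_j\mid j\in J\}}$ for some nonempty $J\subseteq\{1,\dots,n\}$ and $R(\q)=\pin\pp\Lab$ with $\labels\Lab\supseteq\{\ell_1,\dots,\ell_n\}$; I keep only the branches in $J$, setting $\G'=\mkt\pp\q{\{\ell_j.\G'_j\mid j\in J\}}$ and recursing on each $\G_j$ with $R$ updated at $\pp$ and $\q$ by their $\ell_j$-continuations and left unchanged elsewhere. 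A short analysis of the three projection clauses shows the invariant survives along every kept branch.

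Next I would discharge three facts. (i) $\G'$ is a regular tree: the reachable states (subtree of $\G$, $R$) form a finite set since $\G$ and the $\PP_i$ are regular, so $\G'$ has finitely many distinct subtrees. (ii) $\G'$ has finite weights: every path of $\G'$ is a path of $\G$ with the same first occurrences of participants, whence $\depth{\G'}\pp\le\depth\G\pp<\infty$. (iii) Granting projectability, the typing follows by coinduction on the strengthened invariant $R(\pp)\subt\proj{\G'}\pp$: at a kept node the sender's projection $\proj{\G'}\pp=\pout\q{\{\ell_j.\proj{\G'_j}\pp\mid j\in J\}}$ carries exactly the labels $J$ of $R(\pp)$, so rule \rulename{sub-out} now applies in place of \rulename{sub-out$^+$}; the receiver still meets \rulename{sub-in} because its input labels only shrank; and a bystander keeps its residual and closes coinductively (at $\tend$ the invariant forces $R(\pp)=\inact$, and $\inact\subt\inact$). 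Reading $R=\PP_i$ at the root gives $\PP_i\subt\proj{\G'}{\pp_i}$, and since $\participantG{\G'}\subseteq\participantG\G\subseteq\{\pp_i\mid i\in I\}$, rule \rulename{t-sess} yields $\der{}\N{\G'}$.

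The step I expect to be the main obstacle is the projectability hypothesis used in (iii). For a bystander $\pr\notin\{\pp,\q\}$ the definition of projection requires the kept branches $\{\G'_j\mid j\in J\}$ either to project to a single process on $\pr$, or to inputs from one common sender with \emph{pairwise disjoint} label sets. Cutting the sender's outputs branch-by-branch can desynchronise $\pr$'s expected inputs, since one kept branch may retain fewer of $\pr$'s messages than another; so the crux is to show that the invariant $R(\pr)\subtp\proj\G\pr$, which is the same across all kept branches, together with well-formedness of $\G$, keeps these residual inputs coherent and the merged label sets pairwise disjoint. This coherence-of-bystanders argument is the delicate heart of the proof, and the point at which the pruning has to be designed with the most care.
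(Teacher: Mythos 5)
You have put your finger on exactly the right spot, but the obstacle you flag is not merely delicate: it is fatal, both to your plan and to the paper's own argument. Your residual-map pruning is, at bottom, the same construction the paper performs, only packaged differently: the paper proceeds by coinduction on $\G$, reduces $\N$ and $\G$ in lockstep along each label the sender actually offers (via Subject Reduction adapted to $\vdash^+$), obtains branch types $\G'_i$ from the coinductive hypothesis, and then asserts ``we can choose $\G'=\pp\to\q:\setl{\ell}{\G'}{n}$''. Nowhere does it verify that this assembled type is projectable on the participants other than $\pp$ and $\q$ --- precisely your coherence-of-bystanders step. That step cannot be discharged in general, and no alternative choice of $\G'$ can repair it, because the statement itself is false.

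Concretely, let $\G_0=\q\to\pr:\ell_0.(\pp\to\pr:\set{\ell_3.\tend,\ell_4.\tend})$ and $\G=\pp\to\q:\set{\ell_1.\G_0,\,\ell_2.\G_0}$ (two identical branches, so $\G$ is well formed), and let $\N=\pa{\pp}{P}\pc\pa{\q}{Q}\pc\pa{\pr}{R}$ where $P=\pout{\q}{\set{\ell_1.\pout{\pr}{\ell_3},\,\ell_2.\pout{\pr}{\ell_4}}}$, $Q=\pin{\pp}{\set{\ell_1.\pout{\pr}{\ell_0},\,\ell_2.\pout{\pr}{\ell_0}}}$ and $R=\pin{\q}{\ell_0.\pin{\pp}{\set{\ell_3,\ell_4}}}$. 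Then $Q=\proj{\G}{\q}$, $R=\proj{\G}{\pr}$, and $P\subtp\proj{\G}{\pp}=\pout{\q}{\set{\ell_1.\pout{\pr}{\set{\ell_3,\ell_4}},\,\ell_2.\pout{\pr}{\set{\ell_3,\ell_4}}}}$ by \rulename{sub-out$^+$}, so $\derp{}{\N}{\G}$. Your pruning keeps both root branches (since $P$ offers both $\ell_1$ and $\ell_2$) but cuts the inner choice to $\ell_3$ in one branch and to $\ell_4$ in the other; the pruned branches $\q\to\pr:\ell_0.\,\pp\to\pr:\ell_3$ and $\q\to\pr:\ell_0.\,\pp\to\pr:\ell_4$ project on $\pr$ to $\pin{\q}{\ell_0.\pin{\pp}{\ell_3}}$ and $\pin{\q}{\ell_0.\pin{\pp}{\ell_4}}$, which are neither equal nor mergeable (inputs from the same sender whose label sets both contain $\ell_0$), so the pruned type is not projectable on $\pr$. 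Worse, \emph{no} $\G'$ satisfies $\der{}{\N}{\G'}$: since \rulename{sub-out} forces output choices to match exactly, $\proj{\G'}{\pp}$ must equal $P$; as $Q$ and $R$ begin with inputs, the root of $\G'$ must then be $\pp\to\q:\set{\ell_1,\ell_2}$, and inside its branches $\pp$ can only send $\ell_3$ (resp.\ $\ell_4$) to $\pr$ while $\q$ can only send $\ell_0$ to $\pr$; hence every node of a branch involves $\pr$, the merging clause of the projection (which applies only at nodes not involving the projected participant) never fires, and the branch projections on $\pr$ are forced to be the two incompatible inputs above, so $\proj{\G'}{\pr}$ is undefined. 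Thus $\N$ is $\vdash^+$-typable (and lock-free) but not $\vdash$-typable: your step (iii) cannot be saved by any more careful design of the pruning, and the same example breaks the paper's proof at the exact point you identified.
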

\begin{proof} The proof is by coinduction on $\G$. Let $\G=\pp\to\q:\Labgt$.  Then,  
by Lemmas \ref{cfl} and \ref{il} (which easily extend to $\vdash^+$),  Definition \ref{definition:projection} and the definition of $\subt^+$, $\N\equiv \pa\pp\pout\q\Lab\pc\pa\q\pin\pp{\Lab'}\pc\N'$ and $\pout\q\Lab\subtp\proj\G\pp$ and $\pin\pp{\Lab'}\subtp\proj\G\q$.  
Again by the definition of $\subt^+$, $\labels{\Lab}\subseteq\labels{\Labgt}\subseteq\labels{\Lab'}$. Let 
$\Lab=\setl{\ell}{\PP}{n}$, $\Labgt=\mklab {\setl{\ell}{\G}{n}}{\Labgt'}$ and $\Lab'=\mklab {\setl{\ell}{\Q}{n}}{\Lab''}$.
Then $\N\redparr{\pp\ell_i\q}\pa\pp{\PP_i}\pc\pa\q{\Q_i}\pc\N'$ and $\G\redparr{\pp\ell_i\q}\G_i$ for all $i$, $1\leq i\leq n$. Since the proof 
of Theorem \ref{thm:SR} easily adapts to $\vdash^+$, we get 
$\derp{}{\pa\pp{\PP_i}\pc\pa\q{\Q_i}\pc\N'}{\G_i}$ for all $i$, $1\!\leq i\!\leq n$. 
By coinduction there are $\G'_i$ such that $\der{}{\pa\pp{\PP_i}\pc\pa\q{\Q_i}\pc\N'}{\G'_i}$ for all $i$, $1\!\leq i\!\leq n$. We can choose $\G'=\pp\to\q:\setl{\ell}{\G'}{n}$. \QED
\end{proof} 

\medskip
  
   We end this  section by showing that the type system $\vdash$ assures lock-freedom. 
 By Subject Reduction it is enough to  prove 
that well-typed  sessions are deadlock-free and no participant waits forever.
The former follows from  Session Fidelity,
while the latter follows from the following lemma that says that reducing by rule $\rulename{ecomm}$ the weights of the  not involved  participants   
strictly decrease.  

 \begin{lemma}\mylabel{l:e}
 If $\G \redparr{\pp \ell \q} \G'$ by rule \text{\em $\rulename{ecomm}$} and $\pr\not\in\set{\pp,\q}$ and $\pr\in\participantG\G$, then $\depth\G\pr>\depth{\G'}\pr$.  
 \end{lemma}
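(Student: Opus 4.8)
The plan is to reduce the statement to a one-step recurrence for the weight, exploiting that $\pr$ is a bystander of the interaction fired by $\rulename{ecomm}$. First I would invert the rule: the hypothesis $\G \redparr{\pp \ell \q} \G'$ by $\rulename{ecomm}$ forces $\G = \mkt{\pp}{\q}{(\mklab{\ell.\G'}{\Labgt})}$, so $\G$ is a choice with root communication $\pp \to \q$ and branches $\setl{\ell}{\G}{n}$, of which $\G'$ is the $\ell$-branch; say $\G' = \G_{i_0}$. The decisive observation is that, since $\pr \notin \set{\pp,\q}$ by assumption, the symbol $\pr$ does not occur in the root node $\pp \to \q$.

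Next I would connect the path weights in $\G$ with the path weights in the branches. By the definition of $\paths{\cdot}$, every path of $\G$ is of the form $\path = \pp\to\q\,\ell_i\,\path'$ with $\path' \in \paths{\G_i}$. Because $\pr$ is absent from the leading node, the first communication mentioning $\pr$ along $\path$ (if any) already lies inside $\path'$; prefixing the single arrow $\pp\to\q$ raises the length of that prefix by exactly one, which by Definition \ref{df:w} gives the local identity $\depth{\path}{\pr} = 1 + \depth{\path'}{\pr}$ whenever $\pr$ occurs in $\path'$. Maximising over all paths then yields $\depth{\G}{\pr} = 1 + \max_{1\le i\le n}\depth{\G_i}{\pr}$.

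At this point I would invoke well-formedness of $\G$. Since $\depth{\G}{\pr}$ is finite and $\pr \in \participantG{\G}$, the participant $\pr$ must occur at bounded depth in every branch: if $\pr$ were silent in some $\G_i$, that branch would contribute a path along which $\pr$ never appears, hence an infinite weight, contradicting finiteness. Thus each $\depth{\G_i}{\pr}$ is a genuine natural number, in particular $\pr \in \participantG{\G'}$ and $\depth{\G'}{\pr}$ is finite. Since $\G' = \G_{i_0}$ is one of the branches, $\max_{1\le i\le n}\depth{\G_i}{\pr} \ge \depth{\G'}{\pr}$, and therefore $\depth{\G}{\pr} = 1 + \max_{1\le i\le n}\depth{\G_i}{\pr} \ge 1 + \depth{\G'}{\pr} > \depth{\G'}{\pr}$, which is the claim.

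The bookkeeping is routine; the only delicate step is passing from path weights to the tree weight $\depth{\G}{\pr} = 1 + \max_i \depth{\G_i}{\pr}$, where one must make sure the two maxima range over the same finite index set and that finiteness of $\depth{\G}{\pr}$ rules out branches in which $\pr$ stays silent. This is exactly where the hypotheses $\pr \in \participantG{\G}$ together with well-formedness do the work, and it is what makes the decrease strict rather than merely non-strict.
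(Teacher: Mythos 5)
Your proof is correct and follows essentially the same route as the paper's: invert rule \rulename{ecomm} to get $\G = \pp\to\q:(\mklab{\ell.\G'}{\Labgt})$, observe that prefixing the root communication (which does not mention $\pr$) shifts the $\pr$-weight of every path by exactly one, and conclude $\depth{\G}{\pr}\geq 1+\depth{\G'}{\pr}>\depth{\G'}{\pr}$. The paper states precisely this, via the embedding of $\paths{\G'}$ into $\paths{\G}$; your version additionally derives the equality $\depth{\G}{\pr}=1+\max_{1\leq i\leq n}\depth{\G_i}{\pr}$, which is fine but more than is needed.

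One caveat on your third paragraph: the well-formedness detour is both unnecessary and justified by a misreading of Definition \ref{df:w}. Under that definition, a branch $\G_i$ in which $\pr$ is silent has $\depth{\G_i}{\pr}=0$ by the explicit ``otherwise'' clause, and a path that never mentions $\pr$ has no defined weight at all, so it simply does not contribute to the maximum; it does not force $\depth{\G}{\pr}=\infty$. (It is true that a well-formed $\G$ with $\pr\in\participantG{\G}$ and $\pr$ absent from the root must mention $\pr$ in every branch, but the reason is projectability: if $\proj{\G_i}{\pr}=\inact$ for some branch, neither the equal-branches clause nor the input-merging clause of Definition \ref{definition:projection} can apply. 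Finiteness of the weight alone does not give this.) Consequently your closing remark misidentifies what makes the decrease strict: even if $\pr$ were silent in $\G'$ you would still get $\depth{\G}{\pr}\geq 1>0=\depth{\G'}{\pr}$, since $\max_{1\leq i\leq n}\depth{\G_i}{\pr}\geq\depth{\G'}{\pr}$ holds in all cases. Strictness comes from the extra root communication together with finiteness of $\depth{\G}{\pr}$ (guaranteed by well-formedness of $\G$ and inherited by $\depth{\G'}{\pr}$ through your equality, ruling out $\infty>\infty$), not from excluding silent branches.
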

 \begin{myproof}
 Rule $\rulename{ecomm}$ implies $\G= \pp \to \q : (\mklab{\ell.\G'} \Labgt)$. If $ \path$ is a path of $\G'$, then  $(\pp \to \q) \, \ell \, \path$ is a path in $\G$. This gives $\depth\G\pr>\depth{\G'}\pr$. \QED
   \end{myproof} 
   
  Multiparty session typability guarantees 
  lock-freedom.
\begin{theorem}[Lock-Freedom]
\mylabel{theorem:waitfreedom}   
If session $\M$  is    well typed,  then
$\M$ is  lock-free. 
\end{theorem}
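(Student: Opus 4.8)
The plan is to establish the two clauses \ref{d:lf1} and \ref{d:lf2} of \cref{d:lf} separately, keeping a typing global type along every reduction. The common engine is iterated Subject Reduction (\cref{thm:SR}): from $\der{}\M\G$ and $\M\red^*\M'$ we obtain a (well-formed) $\G'$ with $\G\red^*\G'$ and $\der{}{\M'}{\G'}$, so every residual of $\M$ is again well typed. The two halves then use the two directions of Session Fidelity (\cref{sft}): clause \ref{d:lf1} needs that a non-terminated global type can move and that this move is reflected on sessions, while clause \ref{d:lf2} needs in addition the measure provided by \cref{l:e}.

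For deadlock-freedom (clause \ref{d:lf1}), suppose $\M\red^*\M'$ and let $\der{}{\M'}{\G'}$. First I would observe that $\G'=\tend$ forces $\M'\equiv\pa\pp\inact$: by \cref{il} each component process of $\M'$ satisfies $\PP_i\subt\proj{\G'}{\pp_i}=\inact$ (as $\pp_i\notin\participantG\tend$), and the only rule of \cref{definition:subt} with $\inact$ on the right is \rulename{sub-$\inact$}, whence $\PP_i=\inact$ and $\M'$ is structurally congruent to $\pa\pp\inact$. Otherwise $\G'=\mkt\pp\q\Labgt$, so by rule \rulename{ecomm} we have $\G'\redparr{\pp\ell\q}\G''$, and the second part of Session Fidelity (\cref{sft}) yields $\M'\redparr{\pp\ell\q}\M''$. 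Thus either $\M'\equiv\pa\pp\inact$ or $\M'$ reduces, as required.

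For the absence of starvation (clause \ref{d:lf2}), it suffices to prove the following claim and instantiate it after an application of Subject Reduction: \emph{if $\der{}\N\G$ and the process of $\pp$ in $\N$ is not $\inact$, then $\N\red^*\M''\redparr\lambda$ with $\pp$ occurring in $\lambda$.} Since $\proj\G\pp=\inact$ precisely when $\pp\notin\participantG\G$, having a non-$\inact$ process for $\pp$ (which is $\subt\proj\G\pp$) forces $\pp\in\participantG\G$, so $\depth\G\pp$ is finite by well-formedness. I would argue by induction on $\depth\G\pp$. If $\depth\G\pp=0$ then $\pp$ occurs in the root communication, i.e. $\G=\mkt\pp\q\Labgt$ or $\G=\mkt\q\pp\Labgt$; rule \rulename{ecomm} lets $\G$ move with $\pp$ in the label, and Session Fidelity transports this move to $\N$, giving the required $\lambda$. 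If $\depth\G\pp>0$ then the root is $\mkt\pr\ps\Labgt$ with $\pp\notin\set{\pr,\ps}$; pick any branch $\ell$, so $\G\redparr{\pr\ell\ps}\G^\ell$ by \rulename{ecomm} and $\N\redparr{\pr\ell\ps}\N^\ell$ with $\der{}{\N^\ell}{\G^\ell}$ by Session Fidelity. By \cref{l:e}, $\depth{\G^\ell}\pp<\depth\G\pp$, and projectability of $\G$ onto $\pp$ (third clause of \cref{definition:projection}) forces $\proj{\G^\ell}\pp$ to be non-$\inact$, so $\pp$ is still a participant of $\G^\ell$ with a non-$\inact$ process in $\N^\ell$ (by \cref{cfl}). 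The induction hypothesis applied to $\N^\ell$ then gives $\N^\ell\red^*\M''\redparr\lambda$ with $\pp$ in $\lambda$, and prefixing the $\redparr{\pr\ell\ps}$ step concludes.

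I expect the main obstacle to be the inductive step of the second claim, specifically the argument that $\pp$ does not silently disappear when the unrelated communication $\mkt\pr\ps\Labgt$ is consumed. This is where well-formedness is essential: because $\pp\neq\pr,\ps$, the projection clause demands that the projections of all branches onto $\pp$ coincide (or all be inputs from a common sender), so $\pp\in\participantG\G$ propagates to $\pp\in\participantG{\G^\ell}$ for every chosen branch, keeping $\proj{\G^\ell}\pp\neq\inact$; together with the strict decrease of $\depth{\G^\ell}\pp$ from \cref{l:e} this both keeps the participant active and grounds the induction. A secondary point to check is that well-formedness (finiteness of weights and projectability) is preserved along reductions, so that every global type produced by Subject Reduction is a legitimate instance to which \cref{l:e} and the projection reasoning apply.
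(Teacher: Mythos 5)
Your proposal is correct and follows essentially the same route as the paper's own proof: Subject Reduction to keep residuals typed, Session Fidelity~(ii) for clause~(a), and induction on $\depth\G\pp$ with Lemma~\ref{l:e} for clause~(b). You merely spell out some steps the paper leaves implicit (the $\tend$ versus $\pa\pp\inact$ correspondence via Lemma~\ref{il}, and the fact that $\pp$ remains a non-$\inact$ participant after an unrelated \rulename{ecomm} step), which is fine but not a different argument.
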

 \begin{myproof}
 Let $\G$ be a type for $\M$. If $\M\not\equiv\pa\pp\inact$, then $\G\not=\tend$. Let $\G=\mkt \q \pr \Labgt$. By definition of reduction $\G\redparr{\q\ell\pr}\G'$ for some $\ell$, and this implies  $\M\redparr{\q\ell\pr}\M'$  by Theorem \ref{sft}\ref{sft2}. This shows condition \ref{d:lf1} of Definition \ref{d:lf}.
 The proof of condition \ref{d:lf2} of Definition \ref{d:lf} is by induction on $n=\depth\G\pp$. If $n=0$ then either $\G=\mkt \pp \q \Labgt$ or $\G=\mkt \q \pp \Labgt$ and $\G\redparr{\lambda} \G'$ with $\pp$ in $\lambda$  by rule $\rulename{ecomm}$. If $n>0$ then $\G=\mkt \q \pr \Labgt$ with $\pp\not\in\set{\q,\pr}$ and $\G \redparr{\q \ell \pr} \G'$  for all $\ell\in\labels\Labgt$  by rule $\rulename{ecomm}$. By Lemma \ref{l:e} $\depth\G\pp>\depth{\G'}\pp$  and induction applies.  \QED
  \end{myproof} 
  
  It is easy to check that $\der{}\M\G$, where $\M$ and $\G$ are the multiparty session and the global type
  of  Examples \ref{ex:M} and  \ref{ex:G}, respectively.  By the above result, $\M$  of Example \ref{ex:M}  is hence provably lock-free.


\mysection{Connection of  Multiparty-Sessions  via Gateways}

Given two multiparty sessions, they can be {\em connected via gateways} when
they possess two {\em compatible} participants, i.e. participants that offer communications which can be paired
and can hence be transformed into forwarders, that we dub ``gateways''.  
 We  start by discussing 
the relation of compatibility between processes 
by elaborating on  Examples \ref{ex:M} and  \ref{ex:G}.\\
If we decide to look at the participant $\HH$ as an interface,
the  messages sent by her have to be considered as those actually provided by an external environment;
 and  the received messages  as messages expected by such an environment.
 In a sense, this means that, if we abstract from participants' names in the process $H$, we get a description
 of an interface (in the more usual sense) of   
 an external system, rather than an interface of our system.
 
 In order to better grasp the notion of compatibility hinted at above, let us 
 dub  ``$\eraseP$''  the operation abtracting from the participants' names inside processes.
 So, in our example we would get \\
 \centerline{$
\begin{array}{lll}
\eraseP(H) = \pin {\circ}{\textit{text}.  \eraseP(H_1)  $ \qquad $ \eraseP(H_1) = \pout{\circ}{\{ \textit{ack}.\eraseP(H), \textit{nack}.{\pin {\circ}{\textit{transf}. \eraseP(H_1) }},  \textit{stop}  
\}}}
\end{array}
$}
Let us now take into account another system that could work as the environment of the system having the
$\G$ of Example \ref{ex:G} as global  type. 
Let assume such a system to be formed by
participants $\KK$, $\pr$ and $\ps$ interacting according the following protocol:
\begin{description}
 \item  Participant  
$\KK$ sends text messages to $\pr$ and $\ps$ in an alternating way,  starting with $\pr$.  
\item Participants  
$\pr$ and $\ps$ inform $\KK$ that a text has been accepted or refused by sending back, respectively,\\[-0.5mm] 
either 
$\textit{ack}$ or $\textit{nack}$. 
\begin{description}
\item 
In the first case it is the other receiver's turn  to receive the  text: a message \textit{go} is exchanged between $\pr$ and $\ps$ to signal this case; 
\item
in the second case, the sender has to resend the text until it is accepted. Meanwhile  the 
involved participant between $\pr$ and $\ps$ informs the other one that she needs to {\em wait} since the previous message is being resent in a \textit{transf}ormed form.
\end{description}
\end{description}
 This protocol  can be  implemented by the multiparty session\mylabel{smp}  \qquad
$
\M'= \pa {\pr}  R   \pc \pa {\ps} S  \pc  \pa \KK  K_{ \pr } 
$\\\
 where\\
 \centerline{$\begin{array}{lcl@{\quad}lcl@{\quad}lcl}
  R  & =&\pin {\KK}{\textit{text}.R_1} &R_1&=&\pout {\KK}{\set{\textit{ack}. \pout\ps \textit{go}. R_2, \textit{nack}.\ps!\textit{wait}.\pin{\KK}{\textit{transf}.R_1} }} &R_2 &=& \pin{\ps}{\set{\textit{go}.R,\textit{wait}.R_2}}\\
   \end{array}$}
  $\begin{array}{@{\quad}lcl@{\qquad}lcl@{\qquad}lcl}
   S&=&\pin\pr\set{\textit{go}.\pin {\KK}{\textit{text}.S_1, \textit{wait}.S}}&S_1&=&\pout {\KK}{\set{\textit{ack}. \pout\pr \textit{go}. S, \textit{nack}.\pr!\textit{wait}. \pin{\KK}{\textit{transf}.S_1} }}\\ 
   K_\pr &=& \pout {\pr}{\textit{text}.  K_\pr'   &  K_\pr'  &=&  \pin{\pr}{\{ \textit{ack}. K_\ps  }, \textit{nack}.\pout { \pr  }{\textit{transf}. K_\pr'   \}}}\\
 K_\ps   &=& \pout {\ps}{\textit{text}.  K'_\ps  }
&
 K'_\ps  &=& \pin{\ps}{\{ \textit{ack}.K_\pr, \textit{nack}.\pout { \ps  }{\textit{transf}.\text{$ K'_\ps $} }\}}
 \end{array}$\\

 The ``behaviour as interface'' of participant $\KK$  corresponds to\\
\centerline{$
\eraseP(K_\pr)  = \eraseP(K_\ps)  =   \pout {\circ}{\textit{text}. \eraseP(K_\pr')    \quad   \eraseP(K_\pr') = \eraseP(K_\ps') =   \pin{\circ}{\{ \textit{ack.}\text{$\eraseP(K_\pr)$}, \textit{nack.}\circ ! \textit{transf}.\text{$\eraseP( K_\pr')$}  \}}}
$}
Notice that the mapping $\eraseP$ equates $K_\pr$ and $ K_\ps $, i.e. $\eraseP(K_\pr)  = \eraseP(K_\ps)$.\\
 The  interactions ``offered'' and ``requested'' by $\eraseP(H)$ and $\eraseP(K_\pr)$ do not
precisely match each other, that is $\Dual{\eraseP(H)} \neq \eraseP(K_\pr)$ (where $\Dual{(\cdot)}$ is the standard syntactic duality function replacing `!' by `?' and vice versa  \cite{HVK98}).  
Nonetheless it is easy to check that, even if  the system $\pa\pp\PP\pc\pa\q\Q$ of Example \ref{ex:M}     can safely deal with a message $\textit{stop}$
coming from its environment, no problem  arises  in case no such a message will ever arrive.\\
 In the following definition, instead of explicitly introduce the ``$\eraseP$'' function,  we simply
 formalise  the compatibility relation in such a way
two processes are compatible (as interfaces) whenever they offer dual communications to {\em arbitrary} participants, and the set of input labels is a subset of the set of output labels. 

\begin{definition}[Processes'  Compatibility]
\mylabel{def:compatibility}
The {\em interface compatibility relation} $\PP\interfacecomp\Q$ on processes ({\em compatibility} for short), 
 is the largest symmetric relation coinductively defined by:\\
\centerline{$
\begin{array}{c@{\qquad\qquad\qquad\qquad}c}
\inferrule[\rulename{comp-$\inact$}]{}
  {\inact \interfacecomp \inact}
&
\cinferrule[\rulename{comp-{ \sc o/i  
}}]{
 \PP_i \interfacecomp \Q_i \quad \forall 1 \leq i \leq n}{
  \pout\pp  (\mklab{\setl{\ell}{\PP}{n}}{\Lab})    \interfacecomp    \pin \pq  \setl{\ell}{\Q}{n}
}
\end{array}
$}
\end{definition}
 The double line in rule  $\rulename{comp-{\sc o/i }}$  indicates
that the rule is  {\em coinductive}.  
 Notice that the relation $\interfacecomp$ is insensitive to the names of senders and receivers. It is immediate to verify that process compatibility is similar and simpler than 
  the  subtyping defined in \cite{GH05}. Therefore an algorithm for checking process compatibility can be an easy adaptation of the  algorithm given in
 \cite{GH05}.

\noindent
For what concerns our example, it is  straightforward 
to verify that $ H \interfacecomp  K_\pr $.

 Useful properties of compatibility are stated in the following proposition, whose proof is simple. 

\begin{proposition}\mylabel{p:gw}
\begin{enumerate}[label=(\roman*)]
\item\mylabel{p:gw1} If  $\PP    \interfacecomp    \pin \pp  (\mklab\Lab{\Lab'})$, then $\PP    \interfacecomp    \pin \pp \Lab$.
\item\mylabel{p:gw2} If $\pout \pp (\mklab{\ell.P}\Lab)  \interfacecomp \pin\q{\ell.\Q}$, then $\PP \interfacecomp\Q$.
\end{enumerate}
\end{proposition}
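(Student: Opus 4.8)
The plan is to establish both parts by a single unfolding of the coinductive definition of compatibility. Since $\interfacecomp$ is, by definition, the largest symmetric relation closed under rules \rulename{comp-$\inact$} and \rulename{comp-o/i}, it is a greatest fixed point and hence enjoys the following inversion principle: whenever $\PP\interfacecomp\Q$, either $\PP=\Q=\inact$, or, up to symmetry, $\PP$ is an output $\pout{\pr}{\Lambda_P}$ and $\Q$ an input $\pin{\ps}{\Lambda_Q}$ with $\labels{\Lambda_Q}\subseteq\labels{\Lambda_P}$, and for every label shared by $\Lambda_P$ and $\Lambda_Q$ the corresponding continuations are again related by $\interfacecomp$. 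Both claims then follow by inverting one rule application and, for the first part, re-applying the same rule to a restricted set of branches.

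For Part~\ref{p:gw1}, the right-hand component $\pin{\pp}{(\mklab\Lab{\Lab'})}$ is an input, so inversion forces $\PP$ to be an output $\pout{\pr}{\Lambda_P}$ with $\labels{\mklab\Lab{\Lab'}}\subseteq\labels{\Lambda_P}$ and compatible continuations along every label of $\mklab\Lab{\Lab'}$. Since $\labels\Lab\subseteq\labels{\mklab\Lab{\Lab'}}\subseteq\labels{\Lambda_P}$, the output $\pout{\pr}{\Lambda_P}$ still covers all labels of the smaller input $\pin{\pp}\Lab$, and the branchwise compatibilities demanded for $\pin{\pp}\Lab$ are precisely the subcollection of those already provided by the hypothesis, namely those indexed by the labels in $\Lab$. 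One forward application of \rulename{comp-o/i}, now with the input restricted to $\Lab$, gives $\PP\interfacecomp\pin{\pp}\Lab$.

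For Part~\ref{p:gw2}, the pair $\pout{\pp}{(\mklab{\ell.P}\Lab)}\interfacecomp\pin{\q}{\ell.\Q}$ has an output on the left and the single-branch input $\pin{\q}{\ell.\Q}$ on the right, so it can only have arisen from \rulename{comp-o/i} with sole shared label $\ell$: the output offers $\ell$ with continuation $\PP$ and the input offers $\ell$ with continuation $\Q$. The premise of that rule instance is exactly $\PP\interfacecomp\Q$, which is the desired conclusion.

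I do not expect any real obstacle: no coinductive invariant is needed, since each claim is settled by inverting one rule and, in the first case, re-deriving membership by a single forward use of the same rule whose premises form a subset of those already available. The only care required is the bookkeeping of labels, and the disjointness side-conditions implicit in the $\mklab{}{}$ notation are automatically preserved when passing from $\mklab\Lab{\Lab'}$ to its sub-choice $\Lab$, so the restricted output indeed still dominates the restricted input.
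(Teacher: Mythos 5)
Your proof is correct: since $\interfacecomp$ is a greatest fixed point it supports both inversion (every related pair is, up to symmetry, the conclusion of a rule with premises again in $\interfacecomp$) and forward rule application, so your one-step argument—invert \rulename{comp-o/i}, and for part (i) re-apply it restricted to the sub-choice $\Lab$—is sound and needs no coinductive invariant. The paper in fact omits the proof, remarking only that it is simple, and your argument is precisely the intended one.
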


Similarly to what is done in \cite{BdLH18}  for 
the setting of  Communicating Finite State Machines (CFSMs),  the presence of  two compatible processes 
$H$ and $K$
in  two  multiparty sessions $\M$ and $\M'$ enables to connect these systems by transforming  
$H$ and $K$ in such a way  
each message received by $H$ is immediately sent to $K$,
 and each message sent by $H$ is first received from $K$.  And similarly for  what concerns $K$.
In the following definition we hence 
 transform  an arbitrary process $\PP$ not containing a fixed participant $\HH$ into a process which:
1.  sends to $\HH$ each message received in $\PP$;\quad
2.  receives from $\HH$ each message sent in $\PP$.\\
We call $\gateway{P,\HH}$ the so obtained process.

\begin{definition}[Gateway Process]\mylabel{def:gc}
Let $\HH\not\in\participantP \PP$.
We define 
$\gateway{P,\HH}$ coinductively as follows\\
\centerline{$
\begin{array}{lll}
\gateway{\inact,\HH} & = & \inact
\\
\gateway{\pin{\pp}{\Set{\ell_i.P_i \mid {\scriptstyle 1 \leq i \leq n }}},\HH} & = &
         \pin{\pp}{\Set{\ell_i.\pout{\HH}{\ell_i.\gateway{P_i,\HH}} \mid {\scriptstyle 1 \leq i \leq n }}}
\\
\gateway{\pout{\pp}{\Set{\ell_i.P_i \mid {\scriptstyle 1 \leq i \leq n }}},\HH} & = &
         \pin{\HH}{\Set{\ell_i.\pout{\pp}{\ell_i.\gateway{P_i,\HH}} \mid {\scriptstyle 1 \leq i \leq n }}}
\end{array}
$}
\end{definition}

 A first lemma assures the soundness of  the  previous definition. 

\begin{lemma} If $\HH\not\in\participantP \PP$, then $\gateway{\PP,\HH}$ is defined and is a function. 
\end{lemma}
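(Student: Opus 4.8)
The plan is to read the three clauses of Definition~\ref{def:gc} as a \emph{guarded} corecursive definition, and to verify, in turn, that the recursion stays within its intended domain, that it determines a unique infinite tree (so that the construction both exists and is single-valued), and finally that this tree is a genuine process, i.e. regular.

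First I would check that the hypothesis $\HH\notin\participantP\PP$ propagates to the recursive calls. Since $\participantP{\PP_i}\subseteq\participantP\PP$ for every immediate subprocess $\PP_i$ of $\PP$ (immediate from the definition of $\participantP{\cdot}$), we get $\HH\notin\participantP{\PP_i}$, so each $\gateway{\PP_i,\HH}$ occurring on a right-hand side is again a legitimate instance of the construction. Thus the domain ``pre-processes not mentioning $\HH$'' is closed under the recursive calls, and the definition never escapes its intended scope.

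Next come existence and single-valuedness, which together are what ``is defined and is a function'' asserts. Every pre-process has exactly one of the three shapes $\inact$, $\pin\pp\Lab$, $\pout\pp\Lab$, and the three clauses are pairwise mutually exclusive on these shapes; hence for each $\PP$ the root of $\gateway{\PP,\HH}$ and the arguments of the recursive calls are uniquely determined by $\PP$. Moreover each clause emits at least one tree node (an input prefix at the root and an output prefix immediately below it) \emph{before} any recursive call, so the definition is guarded. By the standard greatest-fixed-point argument for guarded corecursion this yields, for each $\PP$ in the domain, an infinite tree $\gateway{\PP,\HH}$; uniqueness follows by a routine coinduction showing that any two trees satisfying the equations have equal roots and, by the coinductive hypothesis, equal subtrees.

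It then remains to check that the tree so obtained is a process and not merely a pre-process, and this is the step I expect to be the real work. Preservation of the pairwise-distinctness of messages is easy: the input clause keeps the label set $\{\ell_i\mid 1\le i\le n\}$ unchanged at the root, the output clause turns it into an input from $\HH$ carrying the same label set, and the inserted nodes $\pout\HH{\ell_i.\cdots}$ and $\pout\pp{\ell_i.\cdots}$ are single-branch, hence trivially carry distinct labels. The genuine obstacle is regularity. I would argue that every subtree of $\gateway{\PP,\HH}$ is either some $\gateway{\PP',\HH}$ with $\PP'$ a subtree of $\PP$, or an auxiliary node $\pout\HH{\ell.\gateway{\PP',\HH}}$ or $\pout\pp{\ell.\gateway{\PP',\HH}}$ determined by a subtree $\PP'$ of $\PP$, one of its top-level labels $\ell$, and (in the output case) its top-level partner $\pp$. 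Since the construction is fully determined by the subtree together with the fixed $\HH$, equal subtrees of $\PP$ produce equal subtrees of $\gateway{\PP,\HH}$; as $\PP$ is regular it has finitely many distinct subtrees, each with finitely many top-level branches, so there are finitely many distinct trees of each of these forms. Hence $\gateway{\PP,\HH}$ has finitely many distinct subtrees and is regular, completing the proof that it is a well-defined process.
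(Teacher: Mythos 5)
Your proposal is correct, and its core---existence and single-valuedness of $\gateway{\PP,\HH}$ by guarded corecursion, with the coinductive step discharging the recursive calls $\gateway{\PP_i,\HH}$---is exactly the paper's argument: the paper's proof just unfolds the input clause, invokes coinduction on the immediate subprocesses, and says ``similarly when $\PP$ is an output process.'' Where you go beyond the paper is in the two side conditions you make explicit. The first, domain closure ($\HH\not\in\participantP\PP$ implies $\HH\not\in\participantP{\PP_i}$), is minor and left tacit in the paper. The second is substantive: the paper's proof never verifies that $\gateway{\PP,\HH}$ is a \emph{regular} tree, i.e.\ a genuine process rather than a mere pre-process, even though this is what lets the gateway later stand as a participant's process in $\M\conn{\HH}{\KK}\M'$ and be compared under $\subt$ in Theorem~\ref{t:m}. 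Your classification of the subtrees of $\gateway{\PP,\HH}$---each is either $\gateway{\PP',\HH}$ for $\PP'$ a subtree of $\PP$, or one of the interpolated single-branch output nodes $\pout{\HH}{\ell.\gateway{\PP',\HH}}$ or $\pout{\pp}{\ell.\gateway{\PP',\HH}}$, determined by finitely many data (a subtree of $\PP$, one of its top-level labels, and a partner), together with the already-established functionality of $\gateway{\cdot,\HH}$---is the right way to bound the number of distinct subtrees, and it is a genuine, if routine, addition to what the paper records. So: same decomposition and same key coinductive idea, but your version actually proves the full soundness claim that the paper's two-line proof only gestures at.
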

\begin{myproof}  The proof is  
by coinduction. If $\PP=\pin{\pp}{\Set{\ell_i.\PP_i \mid {\scriptstyle 1 \leq i \leq n }}}$, then\\
\centerline{$\gateway{\pin{\pp}{\Set{\ell_i.\PP_i \mid {\scriptstyle 1 \leq i \leq n }}},\HH}  = 
         \pin{\pp}{\Set{\ell_i.\pout{\HH}{\ell_i.\gateway{P_i,\HH}} \mid {\scriptstyle 1 \leq i \leq n }}}$} By coinduction  $\gateway{P_i,\HH}$ is defined and is a function for $1 \leq i \leq n$.  The thesis hence follows.  Similarly when $P$ is an output process. \QED
\end{myproof}

The gateway process construction enjoys the 
preservation of 
the structural preorder.
This property is  the key to get  
Theorem \ref{t:m} below and it  essentially relies on the fact that bigger processes  offer the same  output messages. 
 \begin{lemma}\mylabel{l:gp} Let $\HH\not\in\participantP \PP\cup\participantP \Q$.
 If $P\subt Q$, then  
$\gateway{P,\HH} \subt \gateway{Q,\HH}$.
\end{lemma}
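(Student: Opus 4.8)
The plan is to argue by coinduction on the definition of $\subt$, exhibiting a relation $\Rel$ that is closed under the rules \rulename{sub-$\inact$}, \rulename{sub-in} and \rulename{sub-out} (i.e.\ a post-fixed point of the functional whose greatest fixed point is $\subt$) and that contains every pair $(\gateway{P,\HH},\gateway{Q,\HH})$ with $P\subt Q$ and $\HH\not\in\participantP P\cup\participantP Q$. Since $\subt$ is by definition the largest relation closed under those rules, this yields $\Rel\subseteq{\subt}$, hence the thesis. Concretely I would take $\Rel$ to be the union of three families of pairs, in all cases subject to $P\subt Q$ and $\HH\not\in\participantP P\cup\participantP Q$: (a)~$(\gateway{P,\HH},\gateway{Q,\HH})$; (b)~$(\pout{\HH}{\ell.\gateway{P,\HH}},\pout{\HH}{\ell.\gateway{Q,\HH}})$; and (c)~$(\pout{\pp}{\ell.\gateway{P,\HH}},\pout{\pp}{\ell.\gateway{Q,\HH}})$. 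Families (b) and (c) must be added because they are precisely the continuations generated by \cref{def:gc}.

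For the pairs in family (a) I would analyse the last rule deriving $P\subt Q$. The case \rulename{sub-$\inact$} gives $P=Q=\inact$, so both gateways equal $\inact$ and the pair is an instance of \rulename{sub-$\inact$}. For \rulename{sub-in}, with $P=\pin\pp{(\mklab{\setl\ell P n}{\Lab})}$ and $Q=\pin\pp{\setl\ell Q n}$ and $P_i\subt Q_i$, the gateway keeps the outer input from $\pp$ together with all branches (including those of $\Lab$) and replaces each continuation $P_i$ by $\pout{\HH}{\ell_i.\gateway{P_i,\HH}}$; the resulting pair is then an instance of \rulename{sub-in} whose premises $(\pout{\HH}{\ell_i.\gateway{P_i,\HH}},\pout{\HH}{\ell_i.\gateway{Q_i,\HH}})$ belong to family (b). For \rulename{sub-out}, with $P=\pout\pp{\setl\ell P n}$ and $Q=\pout\pp{\setl\ell Q n}$ carrying the \emph{same} messages and $P_i\subt Q_i$, both gateways become inputs from $\HH$ over the \emph{identical} label set $\set{\ell_1,\dots,\ell_n}$; the pair is an instance of \rulename{sub-in} with empty surplus-branch set, its premises $(\pout{\pp}{\ell_i.\gateway{P_i,\HH}},\pout{\pp}{\ell_i.\gateway{Q_i,\HH}})$ lying in family (c). In each subcase the side condition $\HH\not\in\participantP{P_i}\cup\participantP{Q_i}$ follows from $\participantP{P_i}\subseteq\participantP P$ and $\participantP{Q_i}\subseteq\participantP Q$.

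The pairs in families (b) and (c) are disposed of at once: each is a single-branch output — to $\HH$ in (b), to $\pp$ in (c) — carrying the same message on both sides, so it is an instance of \rulename{sub-out} whose unique premise $(\gateway{P,\HH},\gateway{Q,\HH})$ lies back in family (a). This closes the check that $\Rel$ is closed under the rules, and the lemma follows.

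The delicate point — and the reason \cref{l:gp} is stated for $\subt$ and not for $\subtp$ — is the \rulename{sub-out} subcase, as flagged in Remark~\ref{rem:counterexs}. The gateway turns an output into an input guarded by $\HH$, thereby flipping the direction in which branch sets may grow: \rulename{sub-out} forces identical output labels, which become identical input labels and so fit \rulename{sub-in} with no surplus branch. If instead the larger process were allowed strictly more outputs, as \rulename{sub-out$^+$} permits in $\subtp$, then $\gateway{Q,\HH}$ would offer strictly more \emph{inputs} than $\gateway{P,\HH}$; but \rulename{sub-in} only lets the \emph{smaller} process carry extra inputs, so no rule could relate the two gateways. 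I expect this mismatch to be the sole subtlety; the remaining cases are a routine unfolding of \cref{def:gc}, and regularity of processes ensures, as elsewhere in the paper, that the coinductive reasoning is legitimate.
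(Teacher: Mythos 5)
Your proposal is correct and follows essentially the same route as the paper's proof: a coinductive case analysis on the rule deriving $P\subt Q$, unfolding Definition~\ref{def:gc} so that the extra input branches of the smaller process are absorbed by \rulename{sub-in}, identical output labels become identical input labels, and the interposed single-branch outputs are handled by \rulename{sub-out}. The only difference is presentational: the paper uses its informal coinduction style and leaves the intermediate output pairs implicit in a single chain of implications, whereas you package them explicitly as the auxiliary families (b) and (c) of a backward-closed relation, which is a more rigorous rendering of the same argument.
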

\begin{myproof}
We only consider the case of input processes, the proof for output processes is similar  and simpler.\\
If $P=\pin\pp  \setl{\ell}{P}{n}$ and  $Q= \pin \pp  \setl{\ell}{\Q}{n'}$ with $n'\!\leq\! n$,
then $\gateway{P,\HH}=\pin{\pp}{\Set{\ell_i.\pout{\HH}{\ell_i.\gateway{P_i,\HH}} \mid {\scriptstyle 1 \leq i \leq n }}}$ 
and $\gateway{Q,\HH}=\pin{\pp}{\Set{\ell_i.\pout{\HH}{\ell_i.\gateway{Q_i,\HH}} \mid {\scriptstyle 1 \leq i \leq n' }}}$. From $P\subt Q$ we get $P_i\subt Q_i$ for all $i$, $1\leq i\leq n'$. By coinduction $\gateway{P_i,\HH} \subt \gateway{Q_i,\HH}$, which implies  $\pout{\HH}{\ell_i.\gateway{P_i,\HH}}\subt\pout{\HH}{\ell_i.\gateway{Q_i,\HH}}$ for all $i$, $1\leq i\leq n'$,  and hence 
$\gateway{P,\HH} \subt \gateway{Q,\HH}$, by definition of $\subt$ (Definition \ref{definition:subt}).  
  \QED
\end{myproof}

  Lemma \ref{l:gp} fails for $\subtp$. For example,  if $\PP=\pout\pp{\ell_1}$ and $\Q=\pout\pp{\set{\ell_1,\ell_2}}$, then $\PP\subtp\Q$, but $\gateway{\PP,\HH}=\pin\HH{\ell_1}.\pout\pp{\ell_1}\suptp\pin\HH\set{\ell_1.\pout\pp{\ell_1}, \ell_2.\pout\pp{\ell_1}}=\gateway{\Q,\HH}$.

The following relationship between compatibility and structural preorder of processes will be essential in the proof of 
our main result (Theorem \ref{t:m}).
\begin{lemma}
\mylabel{lem:compsubt} If 
$P\interfacecomp Q$, then  
$\PP\subt \PP'$ and  $Q\subt Q'$ imply $\PP'\interfacecomp Q'$.
\end{lemma}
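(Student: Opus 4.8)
The plan is to argue by coinduction, exhibiting a candidate compatibility relation that contains the pair $(\PP',\Q')$ and is closed under the two rules of Definition~\ref{def:compatibility}. Concretely, I would set
\[
\Rel = \{(\PP',\Q') \mid \exists \PP,\Q.\; \PP \interfacecomp \Q,\ \PP \subt \PP',\ \Q \subt \Q'\}
\]
and show that $\Rel$ is symmetric and satisfies both clauses of Definition~\ref{def:compatibility} with premises again in $\Rel$. Since $\interfacecomp$ is the \emph{largest} such relation, this yields $\Rel \subseteq \interfacecomp$, and the lemma follows by reading off the pair $(\PP',\Q')$ of the statement. Symmetry of $\Rel$ is immediate from the symmetry of $\interfacecomp$ (one simply swaps the witnesses $\PP$ and $\Q$), so it suffices to verify the closure condition on a single orientation.

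The core is a case analysis on the derivation of $\PP \interfacecomp \Q$. If it is an instance of rule $\rulename{comp-$\inact$}$, then $\PP = \Q = \inact$; since $\inact \subt \PP'$ and $\inact \subt \Q'$ force $\PP' = \Q' = \inact$ by the only applicable rule $\rulename{sub-$\inact$}$, the pair $(\PP',\Q')$ is again produced by $\rulename{comp-$\inact$}$. Otherwise the pair is, up to symmetry, an instance of $\rulename{comp-o/i}$, with $\PP = \pout\pp(\mklab{\setl{\ell}{\PP}{n}}{\Lab})$ and $\Q = \pin\pq\setl{\ell}{\Q}{n}$ and $\PP_i \interfacecomp \Q_i$ for all $i$. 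Here I would read off the shapes of $\PP'$ and $\Q'$ from the structural preorder. Since $\PP$ is an output, $\PP \subt \PP'$ can only come from $\rulename{sub-out}$, so $\PP'$ is an output towards $\pp$ carrying \emph{exactly} the same labels $\{\ell_1,\dots,\ell_n\}\cup\labels\Lab$ as $\PP$, with continuations $\PP_i \subt \PP'_i$. Since $\Q$ is an input, $\Q \subt \Q'$ can only come from $\rulename{sub-in}$, so $\Q'$ is an input from $\pq$ carrying a \emph{subset} of $\{\ell_1,\dots,\ell_n\}$, say those with index in some $J \subseteq \{1,\dots,n\}$, with continuations $\Q_i \subt \Q'_i$ for $i \in J$.

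It then remains to verify that $(\PP',\Q')$ is itself an instance of $\rulename{comp-o/i}$ with premises in $\Rel$. The side condition of $\rulename{comp-o/i}$ requires the input labels of $\Q'$ to be included in the output labels of $\PP'$; this holds because the labels of $\Q'$ form a subset of $\{\ell_1,\dots,\ell_n\}$, which is in turn contained in the label set $\{\ell_1,\dots,\ell_n\}\cup\labels\Lab$ of $\PP'$. For each shared label $\ell_i$ with $i \in J$, the required premise is $(\PP'_i,\Q'_i)\in\Rel$, which holds by definition of $\Rel$ with witnesses $\PP_i \interfacecomp \Q_i$, $\PP_i \subt \PP'_i$ and $\Q_i \subt \Q'_i$. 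This closes the coinduction.

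The only delicate point — and the reason the statement holds — is the bookkeeping of label sets: compatibility demands that the set of input labels be contained in the set of output labels, and this inclusion is preserved precisely because $\rulename{sub-out}$ keeps the output labels unchanged while $\rulename{sub-in}$ can only shrink the input labels. I expect this label matching to be the main thing to get right; everything else is a mechanical unfolding of the coinductive definitions. (This is also exactly the step that breaks down for $\subtp$, cf.\ the counterexample to Lemma~\ref{l:gp}, where outputs may acquire extra labels.)
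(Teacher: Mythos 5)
Your proof is correct and follows essentially the same route as the paper's: a coinductive argument that, in the $\rulename{comp-o/i}$ case, reads off the shapes of $\PP'$ and $\Q'$ from $\rulename{sub-out}$ (output labels unchanged) and $\rulename{sub-in}$ (input labels can only shrink), checks the resulting label inclusion, and closes the coinduction on the continuations. The paper merely phrases the coinduction in the lighter informal style it adopts throughout (rather than exhibiting the candidate relation $\Rel$ explicitly) and silently skips the trivial $\inact$ case that you spell out, so the two arguments coincide in substance.
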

\begin{myproof}
 Let us assume\\[-4mm]
\centerline{
$\qquad\qquad\qquad\quad\begin{array}{llcl@{\qquad}l}
 \text{} P=\pout\pp  {\setl{\ell}{\PP}{ n}} &  \subt & P'=\pout\pp  {\setl{\ell}{\PP'}{n}}
 \\[-1mm]
  \updownarrow\\
  Q=\pin \pq  \setl{\ell}{\Q}{n'}  &  \subt & Q'=\pin \pq  \setl{\ell}{\Q'}{n''} & \text{with } n''\leq n'\leq n  
  \end{array}$}
   From $P\interfacecomp Q$ we get $\PP_i \interfacecomp \Q_i$ for all $i$, $1 \leq i \leq n'$.  From $\PP\subt \PP'$ we get 
   $\PP_i\subt \PP_i'$ for all $i$, $1 \leq i \leq n$. 
   From $\Q\subt \Q'$ we get 
 $\Q_i\subt \Q_i'$ for all $i$, $1 \leq i \leq n''$. By coinduction we have $\PP_i '\interfacecomp \Q_i'$ for all $i$, $1 \leq i \leq n''$. We can then conclude $\PP'\interfacecomp Q'$. \QED
\end{myproof}

\noindent
The vice versa does not hold. For example $\pout\pp\ell\interfacecomp\pin\q\ell$  and $\pin\q{\set{\ell,\ell'}}\subt\pin\q\ell$,  but $\pout\pp\ell\interfacecomp\pin\q{\set{\ell,\ell'}}$ is false.

\medskip

The formal definition of connection of multiparty sessions via gateways  is 
based on the  
notion of process compatibility (Definition \ref{def:compatibility}) and on the 
addition of communications to a process (Definition \ref{def:gc}). 

\begin{definition}[Multiparty-Sessions' Compatibility]\mylabel{msc}\hfill\\
Two multiparty sessions $\M$, $\M'$ are \em{compatible} via the participants $\HH$, $\KK$ (notation $\pair\M\HH\interfacecomp\pair{\M'}\KK)$ if\\
\centerline{ $\participantS{\M}\cap\participantS{\M'}=\emptyset\quad$ and $\quad\M\equiv\M_1\pc\pa\HH H\quad$  and   $\quad \M'\equiv\M_2\pc\pa\KK K\quad$ with $H\interfacecomp K$.}
\end{definition}

\begin{definition}[Multiparty-Sessions' Connection via Gateways]\hfill\\
\mylabel{def:mptscomp}
Let $\M\equiv\M_1\pc\pa\HH H$, $\M'\equiv\M_2\pc\pa\KK K$ and $\pair\M\HH\interfacecomp\pair{\M'}\KK$.
We define $\M\conn{\HH}{\KK}\!\M'$, the  connection   of $\M$ and $\M'$ via gateways, through $\HH$ and $\KK$,
by\\
\centerline{$\M\conn{\HH}{\KK}\!\M' \ByDef 
\M_1 \pc \M_2
\pc    \pa{\HH}{\gateway{H,\KK}}   \pc    \pa{\KK}{\gateway{K,\HH}}$}
\end{definition}
\begin{example}\mylabel{ex:msc} 
{\em
For what concerns  $\M$ of Example \ref{ex:M}, $\M'$ defined on page \pageref{smp},  $\HH$ and $\KK$,  
it is not difficult to check that\\
\centerline{$
\M\conn{\HH}{\KK}\!\M' \quad=\quad \pa \pp  \PP    \pc   \pa \q \Q       
   \pc   \pa {\pr}  R   \pc \pa {\ps} S    \pc      \pa \HH \hat{H}    \pc  \pa \KK \hat{K_\pr}   
$}\\
 where\\[3pt]
$\begin{array}{@{\hspace{-0pt}}c}
\hat{H} \!=\! \gateway{H,\KK} \!=\! \pin {\q}{\textit{text}. \pout{\KK}{\textit{text}}.\hat{H_1}\quad\hat{H_1}\!=\! \pin{\KK}{\{ \textit{ack}.\pout{\q}{\textit{ack}}.\hat{H},\  \textit{nack}.\pout{\q}{\textit{nack}}.\pin\q{\textit{transf}}.\pout\KK{\textit{transf}}.\hat{H_1},\  \textit{stop}.\pout{\q}{\textit{stop}}  \}}}\\
 \hat{K_\pr} = \gateway{K_\pr,\HH} = \pin {\HH}{\textit{text}}. \pout {\pr}{\textit{text}. \hat{K'_\pr} \quad  \hat{K'_\pr}=\pin{\pr}{\{ \textit{ack}.\pout{\HH}{\textit{ack}}.\hat{K_\ps},\  \textit{nack}.\pout{\HH}\textit{nack}.\pin\HH{\textit{transf}}.\pout {\pr}{\textit{transf}}.\hat{K_\pr'}}  \}}\\
 \hat{K_\ps} = \gateway{K_\ps,\HH} = \pin {\HH}{\textit{text}}. \pout {\ps}{\textit{text}. \hat{K'_\ps} \quad  \hat{K'_\ps}=\pin{\ps}{\{ \textit{ack}.\pout{\HH}{\textit{ack}}.\hat{K_\pr},\  \textit{nack}.\pout{\HH}\textit{nack}.\pin\HH{\textit{transf}}.\pout\ps{\textit{transf}}.\hat{K_\ps'}}  \}}
\end{array}
$}
\end{example}

 In the following section we shall prove that lock-freedom is preserved by the session connection via gateways.
This follows from the fact that we define an operator building a global type 
such that the participant processes of the 
session obtained by connection via gateways are smaller than or equal to the projections of this global type.

\mysection{Connection of Global Types via Gateways}

The composition defined in the previous section can be shown to be 
lock-freedom preserving
by means of Theorem  \ref{theorem:waitfreedom}. In fact it is possible to define a function
on global types with compatible participants, which 
 corresponds   to the lifting
of the construction in 
Definition \ref{def:mptscomp} to the level of global types.

\begin{definition}[Global-Types' Compatibility]
Two global types $\G$, $\G'$ are \em{compatible} via the participants $\HH$, $\KK$ (notation $\pair\G\HH\interfacecomp\pair{\G'}\KK)$ if $\participantG{\G}\cap\participantG{\G'}=\emptyset$ and $\proj{\G}{\HH}\interfacecomp\proj{\G'}{\KK}$.
\end{definition}

\begin{definition}[Global-Types' Connection via Gateways] \mylabel{d:gc}
Let $\pair\G\HH\interfacecomp\pair{\G'}\KK$. We define\\
\centerline{$\G\conn{\HH}{\KK}\G' \ByDef   \mergeG({\HH},{\KK},\noint,\G, \G')$}
where $\mergeG$ is coinductively  given 
by the following 
clauses,
assuming  
$\set{\pp,\q,\pr,\ps}\cap\set{\HH,\KK}=\emptyset$.\\
The clauses must be applied in the given order.\\
\centerline{$\begin{array}{@{\hspace{-2pt}}llcl}
\mbox{\scriptsize $(1)$} & \mergeG({\HH},{\KK},\noint,\tend, \G')  &=& \G'\\[1.5mm]            
\mbox{\scriptsize $(2)$}   &    \mergeG({\HH},{\KK},\noint,{\Gvti \pp\HH \ell \S {\G}}, \G') &=&
      \pp\to\HH : \Set{\ell_i. {\mergeG({\HH},{\KK},\ell_i^\to,\G_i,\G') } \mid  {\scriptstyle 1 \leq i \leq n }}\\[1.5mm]
\mbox{\scriptsize $(3)$}   &  \mergeG({\HH},{\KK},\ell^\to,\G, \KK\to\ps: \Set{\ell'_j.\G'_j\mid {\scriptstyle 1 \leq j \leq m }})  &=& \HH\to\KK: \ell. \KK\to\ps : \ell. \mergeG({\HH},{\KK},\noint,\G,\G'_\vr)  \\
& & &
 \textit{if $\ell= \ell'_\vr$ with $1 \leq \vr \leq m$}
 \\[1.5mm]                                   
\mbox{\scriptsize $(4)$}   &                                   \mergeG({\HH},{\KK},\ell^\to,\G, \pr\to\ps: \Set{\ell'_j.\G'_j\mid {\scriptstyle 1 \leq j \leq m }})  &=& 
      \pr\to\ps : \Set{\ell'_j. \mergeG({\HH},{\KK},\ell^\to,\G,\G'_j)\mid {\scriptstyle 1 \leq j \leq m }}\\[1.5mm]
\mbox{\scriptsize $(5)$}   &       \mergeG({\HH},{\KK},\noint ,\G, \pr\to\KK: \Set{\ell'_j.\G'_j\mid {\scriptstyle 1 \leq j \leq m }}) &=& 
      \pr\to\KK : \Set{\ell'_j. {\mergeG({\HH},{\KK},{\ell'_j}^\from,\G,\G'_j) } \mid  {\scriptstyle 1 \leq j \leq m }}\\[1.5mm]
\mbox{\scriptsize $(6)$}   &   \mergeG({\HH},{\KK},\ell^\from, \HH\to\pq: \Set{\ell_i.\G_i\mid {\scriptstyle 1 \leq i \leq n }},\G')  &=&  \KK\to\HH: \ell. \HH\to\pq : \ell. \mergeG({\HH},{\KK},\noint,\G_\vr,\G')\\
& & &       
\textit{if $\ell= \ell_\vr$ with $1 \leq \vr \leq n$}
\\[1.5mm]                                   
\mbox{\scriptsize $(7)$}   &                                     \mergeG({\HH},{\KK},\ell^\from, \pp\to\pq: \Set{\ell_i.\G_i\mid {\scriptstyle 1 \leq i \leq n }},\G')  &=& 
      \pp\to\pq : \Set{\ell_i.\mergeG({\HH},{\KK},\ell^\from,\G_i,\G') \mid {\scriptstyle 1 \leq i \leq n }} \\[1.5mm]
\mbox{\scriptsize $(8)$}   &        \mergeG({\HH},{\KK},\noint,{\Gvti \pp\pq \ell \S {\G}}, \G')  &=& 
      \pp\to\pq : \Set{\ell_i. \mergeG(\KK,\HH,\noint, \G', \G_i) \mid {\scriptstyle 1 \leq i \leq n } }\\[1.5mm]  
      \mbox{\scriptsize $(9)$}   &       \mergeG({\HH},{\KK},\noint ,\G, \pr\to\ps: \Set{\ell'_j.\G'_j\mid {\scriptstyle 1 \leq j \leq m }}) &=& 
      \pr\to\ps: \Set{\ell'_j.\mergeG({\KK},{\HH},\noint, \G'_j ,\G)\mid {\scriptstyle 1 \leq j \leq m }}   
      \\[1.5mm]                              
\end{array}
$}
\end{definition}

The argument `$\ell^\to$' (resp. `$\ell^\from$')  in $\mergeG({\HH},{\KK},\ell^\to, \G,\G')$
(resp. $\mergeG({\HH},{\KK},\ell^\from, \G,\G')$) is used when
a sending of the message `$\ell$' from 
$\KK$ (resp. $\HH$) 
is expected in the second (resp. first) global type in the subsequent recursive calls.\\
The argument $'\noint'$ is  used instead when all other possible interactions can occur in either the first or
the second global type in the subsequent recursive calls.\\
 In global types, the order of interactions between pairs of unrelated participants is irrelevant, since we would get the very same projections. In clauses {\scriptsize $(8)$} and {\scriptsize $(9)$}, however, we swap roles $\HH$ and $\KK$, as well as their corresponding global types in the ``recursive call".
We do that in order to avoid that in $\mergeG({\HH},{\KK},\noint, \G,\G')$ the interactions preceding a communication via gateway all belong to $\G$ (or $\G'$) and that the communication is completed after the description of interactions all belonging to  $\G'$ (or $\G$).  In this way the parallel nature of the interactions in $\G$ and $\G'$ that are not affected by the communications via gateways is made visually more evident. 

\begin{example}\mylabel{ex:f}
{\em
The protocol implemented by the multiparty session $\M'$  defined on page \pageref{smp}  
can be represented by the following global type $\G_\pr$:
 $$
 \begin{array}{cc}
\begin{array}{l}
 \G_\pr  =    \KK \to \pr : \textit{text}. \G'_\pr \\[1mm]
 \G'_\pr   =     \pr \to \KK: \{ \textit{ack}.  \pr \to \ps: \textit{go}.  \G_\ps,\\
      \phantom{\pq \to \KK: \{    }     \textit{nack}. \pr\to\ps: \textit{wait}. \KK\to \pr : \textit{transf}.  \G'_\pr   \}
\end{array}
&
\begin{array}{l}
\G_\ps   =    \KK \to \ps :  \textit{text}. \G'_\ps \\[1mm]
 \G'_\ps  =      \ps \to \KK: \{ \textit{ack}.  \ps \to \pr: \textit{go}.  \G_\pr,\\
      \phantom{\pq \to \KK: \{    }     \textit{nack}. \ps\to\pr: \textit{wait}.  \KK\to \ps : \textit{transf}. \G'_\ps \}
\end{array}
\end{array}
$$
Then, by Definition \ref{d:gc}, the composition, via $\HH$ and $\KK$, of the $\G$ of Example  \ref{ex:G} and the above $\G_\pr$ is:
$$
\begin{array}{lll}
\G\conn{\HH}{\KK}\G_\pr & =  &    \pp \to \q :   \textit{text} .  \q \to \HH  : \textit{text} . 
\HH \to \KK  : \textit{text} .\KK \to \pr : \textit{text}.   \G'_\pr\conn{\KK}{\HH}\G_1\\[2mm]
\G'_\pr\conn{\KK}{\HH}\G_1 & =  & \pr \to \KK: \{ \textit{ack}. \KK \to \HH  : \textit{ack} . \HH \to \q  : \textit{ack} . \pr \to \ps: \textit{go}.  
\q \to \pp : \textit{ok}. \G_\ps\conn{\KK}{\HH}\G,\\
&& \phantom{\pq \to \KK: \{  }       \textit{nack}. \KK \to \HH  : \textit{nack} .\HH \to \q :  \textit{nack}.\pr\to\ps: \textit{wait}.\q \to \pp : \textit{notyet}. \\
& &\phantom{\pq \to \KK: \{  \textit{nack}.  }\q\to\HH: \textit{transf}.\HH\to\KK: \textit{transf} .\KK\to  \pr  : \textit{transf}.           
 \G'_\pr\conn{\KK}{\HH}\G_1  \}\\[2mm]
\G_\ps\conn{\KK}{\HH}\G & =  & \pp \to \q :   \textit{text} .   \q \to \HH  : \textit{text} . 
\HH \to \KK  : \textit{text} .
\KK \to \ps : \textit{text}.   \G_1\conn{\HH}{\KK}\G'_\ps\\[2mm]
 \G_1\conn{\HH}{\KK}\G'_\ps  & =  &\ps \to \KK: \{ \textit{ack}. \KK \to \HH  : \textit{ack} . \HH \to \q  : \textit{ack} .
\q \to \pp : \textit{ok}. \ps \to \pr: \textit{go}.  \G\conn{\HH}{\KK}\G_\pr,\\
&& \phantom{\pq \to \KK: \{    }        \textit{nack}.   \KK \to \HH  : \textit{nack} .\HH \to \q :  \textit{nack}.\q \to \pp : \textit{notyet}.\ps\to\pr: \textit{wait}. \\
&& \phantom{\pq \to \KK: \{  \textit{nack}.  }\q\to\HH: \textit{transf}.\HH\to\KK: \textit{transf} .\KK\to  \ps  : \textit{transf}.           
 \G_1\conn{\HH}{\KK}\G'_\ps   \}
\end{array}
$$
In  $\G\conn{\HH}{\KK}\G_\pr$ 
 the text messages coming from $\pp$ are delivered to $\q$ and, alternately,
to $\pr$ and $\ps$ 
till they are  accepted ({\em ack}).  Participant  $\pp$ is informed when
text messages are accepted ({\em ok}). During the  cycle,  $\pq$ 
transforms a not yet accepted text into a more suitable form. The messages between $\pq$ and $\pr$ 
and $\ps$ are  exchanged by passing through the coupled forwarders $\HH$ and  $\KK$.

It is worth pointing out that 
 in  $\G\conn{\HH}{\KK}{\G_\pr}$, 
the {\em stop} branch of $\G$  disappeared. 
In fact, since any message coming from $\HH$ in $\G$ does now come from $\KK$ (which is now the gateway forwarding the messages coming in turn from either $\pr$ or $\ps$),
the function $\mergeG$ takes care of the fact that only
{\em ack} or {\em nack} can be received by (the gateway) $\HH$. 
This fact is reflected in the following
Theorem \ref{lem:mainlem}, where it is shown that the projections on $\HH$ and $\KK$ of $\G\conn{\HH}{\KK}{\G}'$
are a ``supertype" of $\gateway{\proj{\G}{\HH},\KK}$  and   
$\gateway{\proj{\G'}{\KK},\HH}$, respectively.

 We  could 
 look  at both $\HH$ and $\pp$ as interfaces:  $\HH$ representing a social-network system, which does not accept rude language, and  $\pp$
a social-network client sending text messages and requiring to be informed about their delivery status.  
From this point of view, the global type $\G$ of Example  \ref{ex:G} actually
describes a ``delivery-guaranteed'' service for text messages, assuring messages to be eventually delivered
 by means of a text-transformation policy. 
 }
\end{example}

 The following lemma assures that the global types obtained 
 during the evaluation of $\mergeG$ are always compatible.  

\begin{lemma}\mylabel{pc} 
Let $\pair\G\HH\interfacecomp\pair{\G'}\KK$.
Then for any call in the tree
of the recursive calls of  $\mergeG({\HH},{\KK},\noint,\G, \G')$:
\begin{enumerate}[(a)]
\item\mylabel{a} if  the call is $\mergeG({\HH},{\KK},\noint,\Y, \Y')$, then  $\proj{\Y}{\HH}\ \interfacecomp\ \proj{\Y'}{\KK}$;
\item\mylabel{b} if the call is $\mergeG({\HH},{\KK},\ell^\to,\Y,\Y')$, then  $\pin {\pp}{\ell.\proj{\Y}{\HH}}\ \interfacecomp\ \proj{\Y'}{\KK}$  for some $\pp$; 
\item\mylabel{c} if the call is $\mergeG({\HH},{\KK},\ell^\from,\Y,\Y')$, then  $\proj{\Y}{\HH}\ \interfacecomp\ \pin {\pp}{\ell.\proj{\Y'}{\KK}}$  for some $\pp$.  
\end{enumerate}
\end{lemma}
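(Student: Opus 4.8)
The plan is to prove the three statements \ref{a}, \ref{b} and \ref{c} simultaneously, as a single invariant that propagates along the edges of the (possibly infinite) tree of recursive calls of $\mergeG$. Since any particular call sits at a \emph{finite} depth in this tree, I would argue by induction on that depth. The root call $\mergeG(\HH,\KK,\noint,\G,\G')$ satisfies \ref{a} because the hypothesis $\pair\G\HH\interfacecomp\pair{\G'}\KK$ is, by definition, exactly $\proj\G\HH\interfacecomp\proj{\G'}\KK$. For the inductive step I would show that whenever a call satisfies the clause of the invariant matching its direction marker ($\noint$, $\ell^\to$ or $\ell^\from$), each of its immediate subcalls satisfies the matching clause. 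Because clauses~(8) and~(9) swap the two interfaces, I would first restate the invariant generically, for a call $\mergeG(\mathsf A,\mathsf B,\sigma,\Y,\Y')$ with $\set{\mathsf A,\mathsf B}=\set{\HH,\KK}$, reading \ref{a}--\ref{c} with $\HH,\KK$ replaced by $\mathsf A,\mathsf B$; this is legitimate precisely because $\interfacecomp$ is symmetric, a fact I will use freely (in particular, all applications of Proposition~\ref{p:gw}\ref{p:gw1} below are up to symmetry).

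The core of the argument is a case analysis on the nine clauses of Definition~\ref{d:gc}, each time unfolding one step of projection (Definition~\ref{definition:projection}) and inverting the compatibility rules of Definition~\ref{def:compatibility}. The two recurring facts are: (i) an input process can be compatible only with an \emph{output} process, and such an output must carry \emph{every} label of the input; and (ii) Proposition~\ref{p:gw}\ref{p:gw1}, which lets me narrow a compatible input to any sub-choice of its branches. For clause~(2) the $\HH$-projection of $\Gvti\pp\HH\ell\S\G$ is the input $\pin\pp{\setlp{\ell}{\G}{n}{\HH}}$; from \ref{a} I keep only the branch $\ell_i$ by Proposition~\ref{p:gw}\ref{p:gw1}, yielding \ref{b} for the subcall $\mergeG(\HH,\KK,\ell_i^\to,\G_i,\G')$. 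For clause~(3), \ref{b} says a single-branch input on $\HH$ is compatible with $\proj{\Y'}\KK$; since $\proj{(\KK\to\ps:\dots)}\KK$ is an output, inversion forces $\ell=\ell'_\vr$ for some $\vr$ (justifying the side condition) and hands back \ref{a} on the continuations. Clauses~(5) and~(6) are the mirror images (with the roles of $\HH,\KK$ and of input/output exchanged), using that $\proj{(\pr\to\KK:\dots)}\KK$ is an input and $\proj{(\HH\to\pq:\dots)}\HH$ is an output, together with Proposition~\ref{p:gw}\ref{p:gw1} for the narrowing needed in clause~(5); clause~(1) is a leaf and requires nothing.

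The delicate clauses are the four in which a communication between two participants \emph{other} than the relevant interface is traversed, so that the interface's projection falls into the ``skip'' case of Definition~\ref{definition:projection} and may be either a common continuation or a \emph{merged} input. In clauses~(4) and~(7) the marker is $\ell^\to$/$\ell^\from$, so the interface side is a single-branch input compatible with the skip-projected partner; by fact~(i) an input cannot be compatible with another input, so the merged-input form is impossible and I must be in the ``all branches project equally'' case, whence $\proj{\G'_j}\KK=\proj{\Y'}\KK$ (resp.\ $\proj{\G_i}\HH=\proj\Y\HH$) and the invariant passes unchanged to every subcall. In clauses~(8) and~(9) the marker is $\noint$ and the merged input \emph{can} occur; here the partner is again forced to be an output, and I recover the (interface-swapped) invariant for each subcall by peeling off the single component $\pin\s{\Lab_i}$ of the merge with Proposition~\ref{p:gw}\ref{p:gw1} and appealing to symmetry of $\interfacecomp$. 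I expect these skip-projection clauses, and in particular keeping the bookkeeping of the $\HH$/$\KK$ swap consistent with the symmetric reading of the invariant, to be the main obstacle; the remaining clauses are routine once facts~(i) and~(ii) are in place.
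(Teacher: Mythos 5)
Your proposal is correct and follows essentially the same route as the paper's proof: simultaneous induction on the depth of the call in the recursion tree, case analysis on the nine clauses, with Proposition~\ref{p:gw}\ref{p:gw1} handling clauses (2), (5), (8), (9), Proposition~\ref{p:gw}\ref{p:gw2} handling (3) and (6), and the ``input cannot be compatible with an input'' inversion forcing the equal-branches projection case in (4) and (7). The only difference is presentational: you make explicit, via the generic $(\mathsf A,\mathsf B)$ restatement, the role-swap bookkeeping for clauses (8) and (9), which the paper leaves implicit in the symmetry of $\interfacecomp$.
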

\begin{myproof} We show \ref{a}, \ref{b} and \ref{c} simultaneously by induction on the depth of the call in the tree,
 and by cases on the applied rule. For rule (1) the proof is immediate, since no new call is generated.\\
Rule (2). By induction  on \ref{a},  $\proj{( \Gvti \pp\HH \ell \S {\Y})}\HH \ \interfacecomp\ \proj{\Y'}\KK$. By Definition \ref{definition:projection}\\ \centerline{$\proj{( \Gvti \pp\HH \ell \S {\Y})}\HH=\pin \pp{\setlp{\ell}{\Y}{n}{\HH}}$} Then $\pin \pp{\setlp{\ell}{\Y}{n}{\HH}}\ \interfacecomp\ \proj{\Y'}\KK$, which implies $\pin\pp{\ell_i.\proj{\Y_i}\HH} \ \interfacecomp\ \proj{\Y'}\KK$ for $1\leq i\leq n$ by Proposition \ref{p:gw}\ref{p:gw1}.\\
Rule (3). By induction  on \ref{b},  $\pin \pp{\ell.\proj\Y\HH}\ \interfacecomp\ \proj{(\KK\to\ps: \Set{\ell'_j.\Y'_j\mid {\scriptstyle 1 \leq j \leq m }})}\KK$. By Definition \ref{definition:projection}\\ \centerline{$\proj{(\KK\to\ps: \Set{\ell'_j.\Y'_j\mid {\scriptstyle 1 \leq j \leq m }})}\KK=\pout\ps{\Set{\ell'_j.\proj{\Y'_j}\KK\mid {\scriptstyle 1 \leq j \leq m }}}$}
By Proposition \ref{p:gw}\ref{p:gw2} $\ell=\ell'_\vr$ with $1\leq\vr\leq m$ implies $\proj\Y\HH\ \interfacecomp \ \proj{\Y'_\vr}\KK$.\\
Rule (4). By induction  on \ref{b},  $\pin \pp{\ell.\proj\Y\HH}\ \interfacecomp\ \proj{(\pr\to\ps: \Set{\ell'_j.\Y'_j\mid {\scriptstyle 1 \leq j \leq m }}) }\KK$. By Definition \ref{def:compatibility} the projection\\ \centerline{$ \proj{(\pr\to\ps: \Set{\ell'_j.\Y'_j\mid {\scriptstyle 1 \leq j \leq m }}) }\KK$} must be an output, which implies $ \proj{(\pr\to\ps: \Set{\ell'_j.\Y'_j\mid {\scriptstyle 1 \leq j \leq m }}) }\KK= \proj{\Y'_1}\KK$ and $\proj{\Y'_j}\KK=\proj{\Y'_l}\KK$ for $1\leq j,l\leq m$ by Definition \ref{definition:projection}. We conclude $\pin \pp{\ell.\proj\Y\HH}\ \interfacecomp\ \proj{\Y'_j}\KK$ for $1 \leq j \leq m$.\\
Rule (8). By induction  on \ref{a},  $\proj{(\Gvti \pp\pq \ell \S {\Y})}\HH  \ \interfacecomp\ \proj{\Y'}\KK$.  By Definition \ref{definition:projection} either\\ \centerline{$\proj{(\Gvti \pp\pq \ell \S {\Y})}\HH =\proj{\Y_1}\HH$} and $\proj{\Y_i}\HH=\proj{\Y_l}\HH$ for $1\leq i,l\leq n$ or $\proj{(\Gvti \pp\pq \ell \S {\Y})}\HH = \pin\pt {(\mklab{\Lab_1}{\mklab{\ldots}{\Lab_n}})}$ and $\proj{\Y_i}\HH=\pin\pt{\Lab_i}$ for $1\leq i\leq n$. In the first case we get immediately $\proj{\Y_i}\HH\ \interfacecomp\ \proj{\Y'}\KK$ for $1\leq i\leq n$. In the second case by Proposition \ref{p:gw}\ref{p:gw1} $\pin\pt {(\mklab{\Lab_1}{\mklab{\ldots}{\Lab_n}})} \ \interfacecomp\ \proj{\Y'}\KK$ implies $\pin\pt{\Lab_i} \ \interfacecomp\ \proj{\Y'}\KK$, i.e. $\proj{\Y_i}\HH \ \interfacecomp\ \proj{\Y'}\KK$,  for $1\leq i\leq n$.\\
The proofs for rules (5), (6),(7) and  (9) are similar to those of rules (2), (3),(4) and (8), respectively.
\QED 
\end{myproof}

\medskip
Using the previous lemma we can show the soundness of Definition \ref{d:gc}. 
\begin{lemma}\mylabel{key} 
Let $\pair\G\HH\interfacecomp\pair{\G'}\KK$. Then
$\mergeG({\HH},{\KK},\noint,\G,\G')$ is defined and it is  a global type, i.e.
 a regular pre-global  type. 
\end{lemma}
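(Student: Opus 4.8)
The plan is to establish the two assertions together — that the corecursion defining $\mergeG$ never gets stuck (definedness) and that its result is regular — leaning throughout on the preservation of compatibility proved in Lemma~\ref{pc}. First I would record a structural invariant satisfied by every recursive call $\mergeG(A,B,\mu,\Y,\Y')$ arising in the call tree: the pair of gateway roles is $\{A,B\}=\{\HH,\KK\}$ (possibly swapped by clauses (8),(9)), the arguments $\Y,\Y'$ are subtrees of the two original arguments (one each), and $B\notin\participantG{\Y}$ while $A\notin\participantG{\Y'}$. This holds at the top call because $\participantG{\G}\cap\participantG{\G'}=\emptyset$, and it is preserved by every clause: the non-swapping clauses replace an argument by one of its subtrees and keep the roles fixed, whereas clauses (8),(9) transpose $A,B$ together with the two arguments, so the two non-membership conditions are simply exchanged.

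Next, for definedness I would verify that every reachable call matches exactly one clause, scanning them in the prescribed order, with the analysis split by the mode $\mu$. For $\mu=\noint$, Lemma~\ref{pc}(\ref{a}) gives $\proj{\Y}{A}\interfacecomp\proj{\Y'}{B}$: clause (1) fires if $\Y=\tend$, clause (2) if $\Y$ is a reception by $A$, clause (5) if $\Y'$ is a reception by $B$, and otherwise I claim one of clauses (8),(9) applies. Indeed, if none of (1),(2),(5) fired and $\Y$'s sender is not $A$, then by the invariant $\Y$ is a communication between participants distinct from $A,B$ and (8) fires; if instead $\Y$'s sender is $A$, then $\proj{\Y}{A}$ is an output, so by Definition~\ref{def:compatibility} $\proj{\Y'}{B}$ is an input, and since (5) did not fire $\Y'$ is neither a reception nor a send by $B$, hence a communication among third parties, so (9) fires. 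For $\mu=\ell^\to$, Lemma~\ref{pc}(\ref{b}) yields an input offering $\ell$ compatible with $\proj{\Y'}{B}$, so the latter is an output whose labels include $\ell$; when $\Y'$ is a send by $B$ these labels are exactly $\{\ell'_j\}$, forcing $\ell=\ell'_\vr$ and meeting the side condition of clause (3), and otherwise clause (4) applies. The mode $\mu=\ell^\from$ is dual, via Lemma~\ref{pc}(\ref{c}) and clauses (6),(7). In each mode the configurations with no applicable clause ($\Y'=\tend$ facing an output, two outputs facing each other, or a missing matching label) are exactly those excluded by compatibility, which is why Lemma~\ref{pc} is indispensable.

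Finally, regularity follows from finiteness of the reachable states. Each call is determined by its state $(A,B,\mu,\Y,\Y')$, where $(A,B)\in\{(\HH,\KK),(\KK,\HH)\}$, $\mu$ ranges over $\noint$ and the finitely many $\ell^\to,\ell^\from$ with $\ell$ a message of $\G$ or $\G'$, and $\Y,\Y'$ range over the finitely many distinct subtrees of the regular trees $\G,\G'$; hence only finitely many states occur. Since each clause applies deterministically, equal states produce identical result subtrees, and every clause emits a bounded number of communication nodes before recurring; therefore $\mergeG(\HH,\KK,\noint,\G,\G')$ has finitely many distinct subtrees and is regular. That the result is a genuine pre-global type — pairwise-distinct labels in each choice — is inherited from the originals in clauses (2),(4),(5),(7),(8),(9) and is immediate for the single-branch prefixes produced by (3),(6).

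I expect the definedness case analysis to be the main obstacle, in particular guaranteeing that clauses (3),(6) always locate their matching label and that no $\noint$-configuration with an output facing a non-matching second argument can get stuck; this is exactly where compatibility, through Lemma~\ref{pc} together with the participant invariant, does the decisive work.
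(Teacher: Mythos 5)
Your proposal is correct and follows essentially the same route as the paper: definedness is obtained by showing (via the compatibility invariant of Lemma~\ref{pc}, split over the three modes $\noint$, $\ell^\to$, $\ell^\from$) that every reachable call matches some clause, and regularity follows from the finiteness of possible call states $(A,B,\mu,\Y,\Y')$ given the regularity of $\G$ and $\G'$. Your version is somewhat more explicit than the paper's — notably the participant-disjointness invariant under the role swaps of clauses (8),(9), and the distinct-labels check for pre-global types — but these are elaborations of the same argument, not a different one.
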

\begin{myproof} 
 To show that $\mergeG({\HH},{\KK},\noint,\G,\G')$ is 
defined,
let us assume, towards a contradiction,  that in the tree of the recursive calls of
$\mergeG({\HH},{\KK},\noint,\G,\G')$ there is one leaf on which no rule of Definition
\ref{d:gc} can be applied.\\ 
If the recursive call is $\mergeG({\HH},{\KK},\noint,\Y,\Y')$, then the applicable rules are (1), (2), (5),  (8) and (9).  So the only deadlock would be for $\Y=\HH\to\pp:\Labgt$ and $\Y'\not=\pr\to\KK:\Labgt'$  and $\Y'\not=\pr\to\ps:\Labgt''$.  This is impossible since by Lemma \ref{pc}\ref{a} $\proj\Y\HH\ \interfacecomp\ \proj{\Y'}\KK$. 
If the recursive call is $\mergeG({\HH},{\KK},\ell^\to,\Y,\Y')$, then the applicable rules are (3) and (4). So the only deadlock would be for $\Y'\not=\pr\to\ps:\Labgt$ and $\Y'\not=\pr\to\KK:\Labgt'$. This is impossible since by Lemma \ref{pc}\ref{b} $\pin\pp\ell.\proj\Y\HH\ \interfacecomp\ \proj{\Y'}\KK$. The proof for the recursive call $\mergeG({\HH},{\KK},\ell^\from,\Y,\Y')$  uses Lemma \ref{pc}\ref{c}  and  it  is 
similar to the previous one.\\ 
 The regularity of the obtained pre-global type follows from observing that 
the regularity of $\G$ and $\G'$  forbid  
an infinite path, in the tree of the recursive calls, in which no two calls  are identical. \QED
\end{myproof}

\noindent
We can now prove the main result concerning projections of 
types  obtained by connecting via gateways. 

\begin{theorem}
\mylabel{lem:mainlem}
 If $\pair\G\HH\interfacecomp\pair{\G'}\KK$, then $\G\conn{\HH}{\KK}\G'$ is well formed.
 Moreover
\begin{enumerate}[label=(\roman*)]
\item
\mylabel{mainlem-i}
 $\gateway{\proj\G\HH,\KK}\subt\proj{(\G\conn{\HH}{\KK}\G' )}{\HH}$ and $\gateway{\proj{\G'}\KK,\HH}\subt\proj{(\G\conn{\HH}{\KK}\G' )}{\KK}$; 
\item 
\mylabel{mainlem-ii}
 $\proj{\G}{\pp}\subt\proj{(\G\conn{\HH}{\KK}\G' )}{\pp} $  and $\proj{\G'}{\pq}\subt\proj{(\G\conn{\HH}{\KK}\G' )}{\pq}$ ,  \\ for any $\pp\in\participantG{\G}$ and  $\pq\in\participantG{\G'}$ such that $\pp\neq\HH$ and $\pq\neq\KK$.
\end{enumerate}
\end{theorem}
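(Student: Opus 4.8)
The plan is to prove both assertions by a single coinduction following the tree of recursive calls of $\mergeG$, after strengthening the statement so that it is stable under the recursion. Because the auxiliary argument of $\mergeG$ cycles through the three modes $\noint$, $\ell^\to$ and $\ell^\from$, and because clauses $(8)$ and $(9)$ swap the roles of $\HH$ and $\KK$ (together with their type arguments), the induction hypothesis must be stated uniformly for an arbitrary call $\mergeG(\HH,\KK,x,\Y,\Y')$ occurring in the call tree, with the gateway inequalities carrying a mode-dependent prefix. Concretely, I would prove: in mode $\noint$, $\gateway{\proj\Y\HH,\KK}\subt\proj{(\mergeG(\HH,\KK,\noint,\Y,\Y'))}\HH$ and symmetrically for $\KK$; in mode $\ell^\to$ the $\HH$-inequality gains an output prefix, $\pout\KK{\ell.\gateway{\proj\Y\HH,\KK}}\subt\proj{(\mergeG(\HH,\KK,\ell^\to,\Y,\Y'))}\HH$ (with the plain inequality for $\KK$); and in mode $\ell^\from$ the symmetric prefix appears on the $\KK$-side. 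Alongside these I would carry the third-participant inequality $\proj\Y\pp\subt\proj{(\mergeG(\HH,\KK,x,\Y,\Y'))}\pp$ for every $\pp\in\participantG\Y\setminus\{\HH\}$, dually for $\Y'$ and $\KK$, and---crucially---the assertion that all these projections of the result are \emph{defined}. Lemma \ref{pc} supplies exactly the compatibility facts ($\proj\Y\HH\interfacecomp\proj{\Y'}\KK$ in mode $\noint$, and the $\ell^\to$/$\ell^\from$ variants) holding at each call, so the case analysis over the nine clauses is always licensed.

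For the $\HH$- and $\KK$-inequalities the argument is essentially mechanical once the invariant is fixed. The gateway-emitting clauses $(3)$ and $(6)$ produce precisely the output prefix demanded by the mode, and since forwarding selects the single matching branch $\ell=\ell'_\vr$ while $\gateway{\proj{\Y'}\KK,\HH}$ retains all branches, the inequality on the receiving side is an instance of \rulename{sub-in} (more inputs $\subt$ fewer inputs); the continuations are then handled by the coinductive hypothesis together with \rulename{sub-out}. This is also where the ``vanishing branch'' phenomenon of Example \ref{ex:f} is absorbed: when $\HH$ offers an output that $\KK$ never supplies, that branch is simply absent from the result, and the inequality tolerates the loss exactly because \rulename{sub-in} permits the supertype to offer fewer inputs.

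The delicate part, which I expect to be the main obstacle, is \emph{definedness} of the result's projections at the branching clauses $(4),(5),(7),(9)$, where the deciding pair differs from the participant onto which we project. For a third participant $\pp$, the disjointness $\participantG\G\cap\participantG{\G'}=\emptyset$ guarantees that $\pp$ lies in exactly one of the two types, so every communication the merge inserts around $\pp$'s actions is invisible to $\proj{\cdot}\pp$, and branch pruning can only delete branches, which is harmless on inputs by \rulename{sub-in}. The genuine difficulty is that a gateway forwarding (clauses $(3)$ and $(6)$) descends into \emph{different} continuations $\G_\vr$ of the original type according to the relayed message, so at a merge node the sibling subtrees may enter different branches of an $\HH$- or $\KK$-controlled choice of $\G$. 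For $\proj{(\mergeG(\ldots))}\pp$ to be defined one must show these siblings still project onto $\pp$ to processes that are equal or combine as disjoint inputs---and this is exactly what the well-formedness (projectability) of $\G$ and $\G'$ forces, since a non-deciding participant has a uniform (or disjoint-input-combinable) projection across all branches of a choice it does not control. I would therefore fold into the coinductive invariant the sharper statement that $\proj{(\mergeG(\ldots))}\pp$ is obtained from $\proj\Y\pp$ only by discarding input branches, so that the ``combine'' operation of Definition \ref{definition:projection} is preserved; proving this monotonicity of branch-combination under $\subt$ and pruning is the technical heart. The same definedness concern arises for the projections onto $\HH$ and $\KK$ themselves at nodes they do not decide, and is resolved identically, the carried-over type argument being literally unchanged across siblings.

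Finally, well-formedness also requires finite weights, which I would treat separately. The key observation is that the transient modes cannot persist: in mode $\ell^\to$ only clauses $(3),(4)$ apply, and $(3)$ must fire within $\depth{\G'}\KK$ steps because $\KK$ occurs at bounded depth in the well-formed $\G'$; symmetrically mode $\ell^\from$ resolves within $\depth\G\HH$ steps. Hence every infinite path of $\G\conn\HH\KK\G'$ consumes infinitely many communications of both $\G$ and $\G'$, interleaved with only boundedly many inserted ones, so the first occurrence of any participant stays at bounded depth. Combined with the regularity already established in Lemma \ref{key}, this yields finiteness of all weights and completes well-formedness; assertions \ref{mainlem-i} and \ref{mainlem-ii} are then just the instances of the strengthened invariant read off at the root call $\mergeG(\HH,\KK,\noint,\G,\G')$.
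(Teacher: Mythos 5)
Your proposal follows essentially the same route as the paper's own proof: the same mode-indexed coinductive invariant (plain gateway inequality in modes $\noint$ and $\ell^\from$, the output-prefixed one $\pout\KK\ell.\gateway{\proj\Y\HH,\KK}\subt\proj{\mergeG(\HH,\KK,\ell^\to,\Y,\Y')}\HH$ in mode $\ell^\to$, specular for $\KK$), the same third-participant inequality $\proj\Y\q\subt\proj{\mergeG(\HH,\KK,\star,\Y,\Y')}\q$, both proved by coinduction over the call tree of $\mergeG$ with case analysis on the nine clauses and with Lemma \ref{pc} licensing each case, and well-formedness via the same bounded-weight observation (the paper states the bound $2(w+w')$ outright where you argue it via the transience of the modes $\ell^\to$ and $\ell^\from$). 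The one place you go beyond the paper is in explicitly carrying definedness and sibling-combinability of the result's projections at the branching clauses $(4),(5),(7),(9)$ --- a point the paper silently absorbs into ``coinduction easily applies'' --- which is added rigour on the same argument rather than a different approach.
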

\begin{myproof}
It is easy to verify that if\\
\centerline{$w=max\set{\depth\G\pp\mid \pp\in\participantG\G}$ and $w'=max\set{\depth{\G'}\pp\mid \pp\in\participantG{\G'}}$} then $\depth{\G\conn{\HH}{\KK}\G' }\pp\leq 2(w+w')$ for all $\pp\in\participantG\G\cup\participantG{\G'}$. Since \ref{mainlem-i} and \ref{mainlem-ii} imply that 
$\G\conn{\HH}{\KK}\G'$ is projectable for all $\pp\in\participantG\G\cup \participantG{\G'}$, then $\G\conn{\HH}{\KK}\G'$ is well formed.  Let $\star\in\set{\noint,\ell^\to,\ell^\from}$. \\
\ref{mainlem-i}. We only show  $\gateway{\proj\G\HH,\KK}\subt\proj{(\G\conn{\HH}{\KK}\G' )}{\HH}$, the proof of  $\gateway{\proj{\G'}\KK,\HH}\subt\proj{(\G\conn{\HH}{\KK}\G' )}{\KK}$  is specular.
We  prove 
that, for any recursive call $\mergeG({\HH},{\KK},\star,\Y, \Y')$ in $\mergeG({\HH},{\KK},\noint,\G, \G')$, the following relations between processes hold:
\begin{enumerate}[(a)]
\item \mylabel{ma}$\gateway{\proj\Y\HH,\KK}\subt\proj{\mergeG({\HH},{\KK},\noint,\Y, \Y')}\HH$; 
\item \mylabel{mb}$\pout\KK\ell.\gateway{\proj\Y\HH,\KK}\subt\proj{\mergeG({\HH},{\KK},\ell^\to,\Y, \Y')}\HH$;
\item \mylabel{mc} $\gateway{\proj{\Y}\HH,\KK}\subt\proj{\mergeG({\HH},{\KK},\ell^\from, \Y,\Y')}\HH$.
\end{enumerate}
We prove \ref{ma}, \ref{mb} and \ref{mc} simultaneously by coinduction on $\Y$ and $\Y'$ and by cases on the rule applied to get $\G\conn{\HH}{\KK}\G' =\mergeG({\HH},{\KK},\noint,\G, \G')$. Rules (1), (4), (5), (7),  (8)  and (9)  do not modify the communications of participant $\HH$, so coinduction easily applies.\\
Rule (2): 
$\mergeG({\HH},{\KK},\noint,{\Gvti \pp\HH \ell \S {\Y}}, \Y') =
      \pp\to\HH : \Set{\ell_i. {\mergeG({\HH},{\KK},\ell_i^\to,\Y_i,\Y') } \mid  {\scriptstyle 1 \leq i \leq n }}$.\\
       Let $\Y=\Gvti \pp\HH \ell \S {\Y}$,  then $\proj\Y\HH=\pin\pp\Set{\ell_i. \proj{\Y_i}{\HH} \mid  {\scriptstyle 1 \leq i \leq n }}$ by Definition \ref{definition:projection}.\\  
     \centerline{$\begin{array}{llll}
     \gateway{\proj\Y\HH,\KK}&=&\pin\pp\Set{\ell_i. \pout\KK{\ell_i}.\gateway{\proj{\Y_i}{\HH},\KK} \mid  {\scriptstyle 1 \leq i \leq n }}&\text{by Definition \ref{def:gc}}\\
     &\subt&\pin\pp\Set{\ell_i. \proj{\mergeG({\HH},{\KK},\ell_i^\to,\Y_i,\Y') }\HH \mid  {\scriptstyle 1 \leq i \leq n }}&\text{by rule $\rulename{sub-in}$ of Definition \ref{definition:subt} since}\\
     &&&\text{$\pout\KK{\ell_i}.\gateway{\proj{\Y_i}\HH ,\KK}\subt\proj{\mergeG({\HH},{\KK},\ell_i^\to,\Y_i,\Y') }\HH $}\\
     &&&\text{ for $1 \leq i \leq n$ by coinduction on \ref{mb} }\\
     &=&\proj{( \pp\to\HH : \Set{\ell_i. {\mergeG({\HH},{\KK},\ell_i^\to,\Y_i,\Y') } \mid  {\scriptstyle 1 \leq i \leq n }})}\HH&\text{by Definition \ref{definition:projection}} 
     \end{array}$}
   \\    
 Rule (3):$ \mergeG({\HH},{\KK},\ell^\to,\Y, \KK\to\ps: \Set{\ell'_j.\Y'_j\mid {\scriptstyle 1 \leq j \leq m }})  = \HH\to\KK: \ell. \KK\to\ps : \ell. \mergeG({\HH},{\KK},\noint,\Y,\Y'_\vr)$, where $\ell= \ell'_\vr$ with $1 \leq \vr \leq m$.\\
   \centerline{$\begin{array}{llll}
 \pout\KK\ell.\gateway{\proj\Y\HH,\KK}&\subt&\pout\KK\ell.\proj{\mergeG({\HH},{\KK},\noint,\Y,\Y'_\vr)}\HH&\text{by rule $\rulename{sub-out}$ of Definition \ref{definition:subt} since}\\
 &&&\gateway{\proj\Y\HH,\KK}\subt\proj{\mergeG({\HH},{\KK},\noint,\Y,\Y'_\vr)}\HH\\
 &&&\text{by coinduction on \ref{ma}}\\
 &=&\proj{(\HH\to\KK: \ell. \KK\to\ps : \ell. \mergeG({\HH},{\KK},\noint,\Y,\Y'_\vr))}\HH&\text{by Definition \ref{definition:projection}}
      \end{array}$} 
 Rule (6): $\mergeG({\HH},{\KK},\ell^\from, \HH\to\pq: \Set{\ell_i.\Y_i\mid {\scriptstyle 1 \leq i \leq n }},\Y')  =  \KK\to\HH: \ell. \HH\to\pq : \ell. \mergeG({\HH},{\KK},\noint,\Y_\vr,\Y')$, where $\ell= \ell_\vr$ with $1 \leq \vr \leq n$. Let $\Y=\Gvti \HH\q \ell \S \Y$,  then $\proj\Y\HH=\pout\pq\Set{\ell_i. \proj{\Y_i}{\HH} \mid  {\scriptstyle 1 \leq i \leq n }}$ by Definition \ref{definition:projection}.\\  
  \centerline{$\begin{array}{llll}
  \gateway{\proj\Y\HH,\KK}&=&\pin\KK\Set{\ell_i. \pout\q{\ell_i}.\gateway{\proj{\Y_i }\HH,\KK} \mid  {\scriptstyle 1 \leq i \leq n }}&\text{by Definition \ref{def:gc}}\\
&\subt&\pin\KK\ell.\pout\pq\ell.\gateway{\proj{\Y_\vr}\HH,\KK}&\text{by rule $\rulename{sub-in}$ 
and } \ell=\ell_\vr\\
&\subt&\pin\KK\ell.\pout\pq\ell.\proj{(\mergeG({\HH},{\KK},\noint,\Y_\vr,\Y'))}\HH&\text{by rules $\rulename{sub-in}$ and $\rulename{sub-out}$ since
}\\
&&&
\gateway{\proj{\Y_\vr}\HH,\KK}\subt\proj{\mergeG({\HH},{\KK},\noint,\Y_\vr,\Y')}\HH\\
&&&\text{by coinduction on \ref{ma}}\\
&=&\proj{(\KK\to\HH: \ell. \HH\to\pq : \ell. \mergeG({\HH},{\KK},\noint,\Y_\vr,\Y'))}\HH&\text{by Definition \ref{definition:projection}}
\end{array}$}  
 \ref{mainlem-ii}. We only show  $\proj\G\q\subt\proj{(\G\conn{\HH}{\KK}\G')}\q$  for $\q\in\participantG\G$ and $\q\not=\HH$. 
  The proof of  $\proj\G\ps\subt\proj{(\G\conn{\HH}{\KK}\G')}\ps$  for $\ps\in\participantG{\G'}$ and $\ps\not=\KK$ is specular. 
 We consider the recursive calls $\mergeG({\HH},{\KK},\star,  \Y,\Y')$ in $\mergeG({\HH},{\KK},\noint,  \G,\G')$. 
  We prove $\proj\Y\q\subt \proj{\mergeG({\HH},{\KK},\star,\Y,\Y')}\q$ 
 by coinduction on $\Y,\Y'$ and by cases on the applied rule.
 The only rule which modifies the communications of $\q$ is rule (6).  Let $\Y=\HH\to\pq: \Set{\ell_i.\Y_i\mid {\scriptstyle 1 \leq i \leq n }}$,   then \\
 \centerline{$\begin{array}{llll}
  \proj\Y\q   & = & \pin\HH\Set{\ell_i.\proj{\Y_i}\q\mid {\scriptstyle 1 \leq i \leq n }}  &\text{by Definition \ref{definition:projection}}   \\
 &\subt&\pin\HH\ell. \proj{\Y_\vr}\q&\text{by rule $\rulename{sub-in}$ and }\ell=\ell_\vr\\
  &\subt& \pin\HH\ell. \proj{(\mergeG({\HH},{\KK},\noint,\Y_\vr,\Y'))}\q&\text{by rule $\rulename{sub-in}$ since by coinduction}\\ 
  &&&\proj{\Y_\vr}\q\subt\proj{\mergeG({\HH},{\KK},\noint,\Y_\vr,\Y')}\q\\
  &=&\proj{( \KK\to\HH: \ell. \HH\to\pq : \ell. \mergeG({\HH},{\KK},\noint,\Y_\vr,\Y'))}\q&\text{by Definition \ref{definition:projection}   }  
  \end{array}$}  
  \begin{flushright}\vspace{-4mm}\QED\end{flushright}
\end{myproof}
We now show that if we start from two well-typed sessions which are compatible, then by building their 
 connection  
via gateways we get a well-typed session too. 
This is relevant, since well-typed sessions enjoy lock-freedom (Theorem \ref{theorem:waitfreedom}).
\begin{theorem}\mylabel{t:m} If $\pair\M\HH\interfacecomp\pair{\M'}\KK$ and $\der{}\M\G$ and $\der{}{\M'}{\G'}$, then $\der{}{\M\conn{\HH}{\KK}\M' }{\G\conn{\HH}{\KK}\G' }$.
\end{theorem}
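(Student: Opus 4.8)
The plan is to read the goal through the single typing rule $\rulename{t-sess}$ (Definition~\ref{definition:typing}): deriving $\der{}{\M\conn{\HH}{\KK}\M'}{\G\conn{\HH}{\KK}\G'}$ amounts to showing that each participant process of the connected session is below (w.r.t.\ $\subt$) the projection of $\G\conn{\HH}{\KK}\G'$ onto that participant, together with the inclusion $\participantG{(\G\conn{\HH}{\KK}\G')}\subseteq\participantS{(\M\conn{\HH}{\KK}\M')}$. Before that I would settle a preliminary: that $\G\conn{\HH}{\KK}\G'$ is even defined and well formed. Writing $\M\equiv\M_1\pc\pa\HH H$ and $\M'\equiv\M_2\pc\pa\KK K$ as in Definition~\ref{def:mptscomp}, the Inversion Lemma (Lemma~\ref{il}) applied to $\der{}\M\G$ and $\der{}{\M'}{\G'}$ gives $H\subt\proj\G\HH$, $K\subt\proj{\G'}\KK$, and $\PP\subt\proj\G\pp$ (resp.\ $\Q\subt\proj{\G'}\pq$) for every ordinary component $\pa\pp\PP$ of $\M_1$ (resp.\ $\pa\pq\Q$ of $\M_2$), as well as $\participantG\G\subseteq\participantS\M$ and $\participantG{\G'}\subseteq\participantS{\M'}$. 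Since $\pair\M\HH\interfacecomp\pair{\M'}\KK$ supplies $H\interfacecomp K$ and $\participantS\M\cap\participantS{\M'}=\emptyset$, Lemma~\ref{lem:compsubt} yields $\proj\G\HH\interfacecomp\proj{\G'}\KK$, and the disjointness descends to $\participantG\G\cap\participantG{\G'}=\emptyset$. Hence $\pair\G\HH\interfacecomp\pair{\G'}\KK$, so $\G\conn{\HH}{\KK}\G'$ is a well-formed global type by Theorem~\ref{lem:mainlem}, and the three kinds of participant of the connected session (the ordinary components of $\M_1$ and $\M_2$, and the two reshaped gateways $\HH$, $\KK$) can be treated separately.

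For an ordinary participant $\pp\neq\HH$ of $\M_1$, the component keeps its original process $\PP$ with $\PP\subt\proj\G\pp$. If $\pp\in\participantG\G$, then Theorem~\ref{lem:mainlem}\ref{mainlem-ii} gives $\proj\G\pp\subt\proj{(\G\conn{\HH}{\KK}\G')}\pp$, and transitivity of the preorder $\subt$ yields $\PP\subt\proj{(\G\conn{\HH}{\KK}\G')}\pp$. If instead $\pp\notin\participantG\G$, then $\proj\G\pp=\inact$, forcing $\PP=\inact$; since $\participantS\M\cap\participantS{\M'}=\emptyset$ also gives $\pp\notin\participantG{\G'}$, the participant $\pp$ does not occur in $\G\conn{\HH}{\KK}\G'$, so its projection is $\inact$ and $\inact\subt\inact$ closes the case. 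The ordinary components of $\M_2$ are handled symmetrically using the second halves of Lemma~\ref{il} and Theorem~\ref{lem:mainlem}\ref{mainlem-ii}.

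For the gateway $\HH$, whose process is now $\gateway{H,\KK}$, I would chain two monotonicity facts. From $H\subt\proj\G\HH$ and Lemma~\ref{l:gp} I obtain $\gateway{H,\KK}\subt\gateway{\proj\G\HH,\KK}$, and Theorem~\ref{lem:mainlem}\ref{mainlem-i} gives $\gateway{\proj\G\HH,\KK}\subt\proj{(\G\conn{\HH}{\KK}\G')}\HH$; transitivity of $\subt$ then yields $\gateway{H,\KK}\subt\proj{(\G\conn{\HH}{\KK}\G')}\HH$, and $\KK$ is symmetric. Applying Lemma~\ref{l:gp} requires the side condition $\KK\notin\participantP H\cup\participantP{\proj\G\HH}$: the first part holds because $\gateway{H,\KK}$ (hence $\M\conn{\HH}{\KK}\M'$) is defined in Definition~\ref{def:mptscomp}, which presupposes $\KK\notin\participantP H$ (guaranteed, with $\participantP H\subseteq\participantS\M$, by $\participantS\M\cap\participantS{\M'}=\emptyset$); the second holds because $\participantP{\proj\G\HH}\subseteq\participantG\G$ and $\KK\in\participantG{\G'}$, which are disjoint.

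Finally, the inclusion premise of $\rulename{t-sess}$ follows by observing that every communication produced by $\mergeG$ stems from a communication of $\G$ or $\G'$ and introduces no names beyond the already-present gateways $\HH,\KK$, so $\participantG{(\G\conn{\HH}{\KK}\G')}\subseteq\participantG\G\cup\participantG{\G'}\subseteq\participantS\M\cup\participantS{\M'}=\participantS{(\M\conn{\HH}{\KK}\M')}$. Assembling the three cases and this inclusion through rule $\rulename{t-sess}$ concludes the proof. The real content is already absorbed into Theorem~\ref{lem:mainlem} and Lemma~\ref{l:gp}, so the remaining work is essentially bookkeeping; the only delicate points I expect are the degenerate $\inact$ (non-participant) cases and the verification of the $\participantP$ side conditions, both of which rest entirely on the disjointness $\participantS\M\cap\participantS{\M'}=\emptyset$ furnished by compatibility.
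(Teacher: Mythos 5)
Your proposal is correct and follows essentially the same route as the paper's own proof: both derive global-type compatibility by applying Lemma~\ref{lem:compsubt} to the inversion facts $H\subt\proj\G\HH$ and $K\subt\proj{\G'}\KK$, then chain Lemma~\ref{l:gp} with Theorem~\ref{lem:mainlem} through transitivity of $\subt$, and close with rule $\rulename{t-sess}$. You are in fact more explicit than the paper about the bookkeeping it leaves implicit (the degenerate $\inact$ components, the $\participantP$ side conditions for Lemma~\ref{l:gp}, and the participant-inclusion premise of $\rulename{t-sess}$).
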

\begin{myproof}
The typing $\der{}\M\G$ implies $\participantG\G\subseteq\participantS\M$. The typing $\der{}{\M'}{\G'}$ implies $\participantG{\G'}\subseteq\participantS{\M'}$. Then $\participantS{\M}\cap\participantS{\M'}=\emptyset$ gives $\participantG{\G}\cap\participantG{\G'}=\emptyset$.
Let $\M=\M_1\pc\pa\HH H$ and $\M'=\M_2\pc\pa\KK K$. By construction\\ \centerline{$\M\conn{\HH}{\KK}\M' =\M_1\pc\M_2\pc\pa\HH {\gateway {H,\KK}}\pc\pa\KK {\gateway{K,\HH}}$} From $\der{}\M\G$ we get $H\subt\proj\G\HH$. From $\der{}{\M'}{\G'}$ we get $K\subt\proj{\G'}\KK$. Lemma \ref{lem:compsubt} implies 
$\proj\G\HH\interfacecomp\proj{\G'}\KK$.\linebreak
Lemma \ref{l:gp} implies 
$\gateway {H,\KK}\subt\gateway {\proj\G\HH,\KK}$ and $\gateway {K,\HH}\subt\gateway {\proj{\G'}\KK,\HH}$. \\
We conclude  $\der{}{\M\conn{\HH}{\KK}\M' }{\G\conn{\HH}{\KK}\G' }$  using the projections of $\G\conn{\HH}{\KK}\G' $ given in Theorem \ref{lem:mainlem}. \QED
\end{myproof}

It is worth noticing that $\pair\G\HH\interfacecomp\pair{\G'}\KK$ and $\der{}\M\G$ and $\der{}{\M'}{\G'}$  do not imply $\pair\M\HH\interfacecomp\pair{\M'}\KK$. \\
Take as example $\M=\pa\pp{\pin\HH\ell}\pc\pa\HH{\pout\pp\ell}$, $\M'=\pa\q{\pout\KK\ell}\pc\pa\KK{\pin\q{\set{\ell,\ell'}}}$, $\G=\HH\to\pp:\ell$, $\G'=\q\to\KK:\ell$. In fact $\pout\pp\ell\interfacecomp\pin\q{\set{\ell,\ell'}}$ does not hold.
\begin{remark}\label{rem:counterexs}
{\em The proof of Theorem \ref{t:m} uses Lemma \ref{l:gp} which fails for the typing system $\vdash^+$. In spite of this, we conjecture that Theorem \ref{t:m} holds for $\vdash^+$ as well. The main reason is that compatibility requires all inputs  to have corresponding outputs, and this forbids to exploit the difference between $\subt$ and $\subtp$.
 }\end{remark}
Of course we could relax our typing system so that, in Rule $\rulename{t-sess}$,
$\subt$ is used 
for  interface processes  (i.e. those that are transformed into gateways when systems are connected), while 
$\subtp$ is used for all other processes. 
This would result, however, in a fairly serious restriction of the flexibility of  system connections, since we should 
establish a priori the interfaces of  systems.   \\

 As a possible general applications of our results,
let us suppose we have two  systems that correspond to  multiparty-sessions that are compatible via some participants (according to Definition \ref{msc}) and that are well typed (according to Definition \ref{definition:typing}). At this point we can ``deploy" the connected system (following Definition \ref{def:mptscomp}) without any 
 further 
verification step, since Theorem \ref{t:m} ensures  that  in such conditions we have a well-typed  and hence lock-free  connected system. 
Besides, we are able to provide the documentation (the global type) of the resulting systems.


\mysection{Related Works}\mylabel{s:rfw}
The distinguishing feature of an {\em open} system of concurrent components is its 
capacity of communicating with the ``outside'', i.e.\ with an environment of the system.
This ability provides means for composing open systems to larger systems (which may still be open).
In order to  compose systems ``safely'', it is common practice to rely on interface descriptions. \\
 MPST systems  \cite{Honda2016,DBLP:conf/sfm/CoppoDPY15,MYHesop09} 
are usually assumed to be {\em closed}, since all the components needed for the functioning of the
system must be already there. 
In \cite{BdLH18} a novel approach to open systems has been proposed where, 
according to the current needs, the behaviour of any participant can be regarded
as an ``interface''.
An interface  is hence intended to represent - somehow dually with respect to the standard notion of interface -
part of  the expected communication behaviour of the environment.
Identifying a participant behaviour as interface corresponds to expecting such a behaviour to be realised by the environment rather than
by an actual component of the system. 
Then, according to such an approach, there is actually no distinction between 
a closed and an open system. 
In particular, 
once two systems possess two ``compatible''  interfaces, they can be connected.
The connecting mechanism  of  
 \cite{BdLH18} uses suitable forwarders, dubbed ``gateways'', 
  for  this purpose. 
The gateways are automatically synthesised out of the compatible interfaces  and the connection of two systems simply consists
in replacing the latter by the former.

In the present paper we have provided a multiparty formalism and we  have
adapted the approach of \cite{BdLH18} to it.
Our calculus of multiparty sessions is  like those 
of \cite{DezEtAlt16,GJPSY19}, but for the use of coinduction instead of induction which is inspired by \cite{CGP09,DS}.  As in \cite{DS} we get rid of local types, which in many calculi are similar to processes \cite{CDG19,DezEtAlt16,GJPSY19}.  The syntax of global types is the coinductive version of the standard syntax \cite{Honda2016} and the notion of projection is an extension of both the standard projection \cite{Honda2016} and the projection given in \cite{DS}.  Our global types  assure lock-freedom  of multiparty sessions.

A relevant feature of our formalism is that the connection operation on systems can be ``lifted'' to the level of global types. In  \cite{BdLH18}, where systems of  CFSMs   were taken into account,
 such a lifting was done by  extending the  syntax of global  descriptions with a {\em new} symbol, whose semantics is indeed the connection-by-gateways at system level.  Instead in the present paper we can use  the standard syntax to build the global type of the session obtained by connecting. 
Moreover, we have shown that  the compatibility relation of \cite{BdLH18}, which requires duality, can be relaxed to a relation
strongly  similar 
to  session-types'  subtyping \cite{GH05,DemangeonH11}. Our structural preorder  on processes mimics the subtyping relation between session types of  \cite{CDG19}, which is a restriction of the subtyping of \cite{DemangeonH11}. 
This choice is justified by the fact that the subtyping of \cite{DemangeonH11} allows process substitution, while the subtyping of \cite{GH05} allows channel substitution, as observed in \cite{Gay16}.

 In \cite{LT12}  global types are build out of several local types (under certain conditions). We also aim at
getting global types, the difference being that this is obtained out of the global types describing the two systems which are connected. The ``dynamic" addition of participants (they can join/leave the session after it's been set up) is supported in the calculus of \cite{HY17}. In that work the extension of a system is part of the global protocol. The extension operation is sort of  ``internalised".
We take instead the standard point of view of open systems, where the possible extensions cannot be 
``programmed"  in advance. 
The two approaches to the system-extension issue look orthogonal.

 Both ``arbiter processes''  \cite{CLMSW16} and ``mediums''  \cite{CP16} coordinate communications
described by global types. A difference  with the present paper  is that their aim is to reduce the interactions
in multiparty sessions to interactions in binary sessions.
Our gateways do instead act as simple ``forwarders", with the aim of connecting two multiparty systems.
Nonetheless, our work could be further developed and investigated
in the logical context of \cite{CLMSW16}: in the logical interpretation of multiparty sessions
one could introduce a ``connection-cut" corresponding to a sort of connection-by-gateways-operator. Then the good properties of the system corresponding to the proof containing the cut should be guaranteed
by proving that the ``connection-cut" is actually an admissible rule.
The proof should consist in a ``connection-cut elimination" procedure corresponding to our {\sc cn} function on global types, once extended
(as we claim it can be, see next Section) in order to ``bypass" the use of gateways.

\mysection{Future Works and Conclusion}\mylabel{s:fwc}
 The MPST framework does work fairly well for the design of closed systems, but does not possess the flexibility open systems can offer.
Managing to look at global types as overall descriptions of open systems results in the possibility of a modular design of systems.
From another point of view, by means of our approach one could develop systems where some participants, instead of representing
actual processes, describe sort of ``API calls", along the line of what some researchers refer to as Behavioural-API.
Moreover, the theory we propose could be helpful also after the system implementation phase. Let us assume to have a system developed
using the MPST software-development approach. After the implementation phase, one could realise that the service corresponding to a participant of the system can be more suitably provided by another system. The participant can then be safely replaced by a gateway connection with the other system and the connection operation on global types enables to get a global view of what is going on in the resulting system.

We conjecture the completeness of our process compatibility, i.e. that the session  $\pair\M\HH\interfacecomp\pair{\M'}\KK$ can reduce to a stuck session whenever $\HH$ and $\KK$ are not compatible. This could be shown by taking inspiration from the completeness proofs for subtyping of \cite{DezEtAlt16,GJPSY19}.

The use of gateways enables us to get a ``safe''  systems'  composition by minimally affecting the systems themselves, being just the interfaces to be modified. 
One could however wonder whether gateways are strictly necessary to get safe connections in our  multiparty-sessions'  setting.
 As suggested in \cite{LTPC}, one could try to ``bypass'' the use of gateways by taking the interface participants out 
and  changing   
some senders' and receivers' names in the other participants' ``code''.
The following simple example shows that just a renaming would not work in general.
Let us consider the following global types.\\
\centerline{$
\begin{array}{rcl@{\hspace{22mm}}rcl}
\G & = &\pp\to\HH:\ell.\G
&
\G' & = & \KK\to\pr:\ell.\KK\to\ps:\ell.\G'
\end{array}
$}
It is immediate to check that 
the multiparty sessions corresponding to $\G$ and $\G'$ are\\
\centerline{$
\begin{array}{rcl@{\hspace{22mm}}rcl}
\M & = &\pa \pp \PP \mid \pa \HH H 
&
\M'& = &\pa \KK K \mid \pa \pr R  \mid \pa \ps S
\end{array}
$}
where
$\quad
P =   \pout{\HH}{\ell}.P
\qquad
H  =   \pin{\pp}{\ell}.H
\qquad
K =  \pout{\pr}{\ell}{}.\pout{\ps}{\ell}.K
\qquad
R =  \pin{\KK}{\ell}.R
\qquad
S =  \pin{\KK}{\ell}.S
$\\
On  the side of $\M'$ we could take $K$ out and change some senders' and receivers' names
in $R$ and $S$ in order they can receive the message $\ell$ directly from $\pp$,  so  obtaining
$\tilde R = {\pin{\pp}{\ell}.\tilde R}$  and $\tilde S = {\pin{\pp}{\ell}. \tilde S}$.
On the side of $\M$, instead, after taking out $H$,
we could not get a sound connection by  a simple renaming for the recipient $\HH$ in $P = \pout{\HH}{\ell}.P$,
since the message $\ell$ should be delivered, alternately, to  $\tilde R$ and $\tilde S$.
A safe connection would hence imply also a modification of the ``code'' of $P$ as follows:  $\tilde\PP = \pout{\pr}{\ell}.\pout{\ps}{\ell}.\tilde\PP$.
We conjecture that the function $\;\,\conn{\HH}{\KK}\,\;$ on global types can be redefined in such a way
that, in the present example, by projecting
$\G\conn{\HH}{\KK}\G'$  we  exactly obtain 
$ \pa \pp {\tilde{\PP}}     \pc    \pa \pr {\tilde R}    \pc   \pa \ps {\tilde S}$.
This new connection operation on global types would result in a useful tool for the modular design of systems
via global types. We leave the investigation of such alternative definition of 
$\ \conn{\HH}{\KK}\; $ for future work.

 The results of this paper would be more applicable accounting for asynchronous communications. 
In particular, a first relevant step would consist in  allowing 
gateways to interact asynchronously.
We expect the compatibility could be extended, since the subtyping for asynchronous multiparty sessions is more permissive than
the subtyping for the synchronous ones \cite{MYHesop09}. Of course this extension requires care, being the subtyping of \cite{MYHesop09} undecidable,
as shown in  \cite{BCZ17,LY17}.

The connection via gateways proposed by the authors of \cite{BdLH18} and 
exploited in the present paper in a multiparty sessions  setting does produce networks of systems possessing a tree-like topology.
In order to get general graphs topology, it sounds natural to
extend  the present ``single interface'' connection  to a ``multiple interfaces'' one.
Such  an extension,
however,
immediately reveals itself to be unsound:
by connecting via gateways more than  one 
pair of
compatible interfaces one could obtain a deadlocked system. 
A very simple example  for that is 
 $\quad
\G = \pp\to\HH:\ell 
\quad\text{and}\quad
\G' = \KK\to\ps:\ell
$\\
By projection we get the systems
$\quad
\M = \pa \pp {\pout{\HH}{\ell}{}} \mid  \pa \HH {\pin{\pp}{\ell}{}}\quad\text{and}\quad
\M' =  \pa \KK {\pout{\ps}{\ell}{}} \mid \pa \ps {\pin{\KK}{\ell}{}} 
$\\
It is immediate to check that $\pp$ and $\ps$ are compatible, as well as $\HH$ and $\KK$.
Simultaneously connecting $\M$ and $\M'$ through both the compatible pairs ($\pp,\ps$) and ($\HH,\KK$)  would result in the following deadlocked  system
$\qquad\qquad
\pa \pp {\pin{\ps}{\ell}{}.\pout{\HH}{\ell}{}}\quad \mid\quad  \pa \HH {\pin{\pp}{\ell}{}.\pout{\KK}{\ell}{}}\quad
\mid\quad \pa \KK {\pin{\HH}{\ell}{}.\pout{\ps}{\ell}{}}\quad \mid\quad \pa \ps {\pin{\KK}{\ell}{}.\pout{\pp}{\ell}{}} 
$\\ 
(A similar example can be  developed 
also in the CFSMs setting of \cite{BdLH18}).
In order to guarantee ``safeness'' of multiple connections, suitable requirements have hence to be devised.
An adaptation to the present setting of the {\em interaction type system}  of \cite{DBLP:journals/mscs/CoppoDYP16} could be investigated in future for such an aim. 


\paragraph{Acknowledgments} We are indebted to Ivan Lanese and Emilio Tuosto for many enlightening discussions on the subject of this paper. We gratefully  acknowledge 
the anonymous referees for the interaction through the ICE web site 
and for their reports. The final version of this paper strongly improved in clarity and correctness thanks to their observations and suggestions.

\bibliographystyle{eptcs}
\bibliography{main}

\end{document}